\definecolor{pku-red}{RGB}{139,0,18} 
\theoremstyle{plain} 
\newtheorem{theorem}{Theorem}[section]
\newtheorem{corollary}[theorem]{Corollary}
\newtheorem{example}{Example}
\theoremstyle{definition}
\newtheorem{definition}{Definition}[section]
\theoremstyle{remark}
\newtheorem{remark}{Remark}[section]
\theoremstyle{remark}
\newtheorem*{remark*}{Proof Sketch and Remark}
\newcommand{\bbE}{\mathbb{E}}
\newcommand{\bbN}{\mathbb{N}}
\newcommand{\bbR}{\mathbb{R}}
\newcommand{\bbZ}{\mathbb{Z}}
\newcommand{\bmb}{{\bm{b}}}
\newcommand{\bmc}{{\bm{c}}}
\newcommand{\bmd}{{\bm{d}}}
\newcommand{\bmf}{{\bm{f}}}
\newcommand{\bmk}{{\bm{k}}}
\newcommand{\bmu}{{\bm{u}}}
\newcommand{\bmv}{{\bm{v}}}
\newcommand{\bmw}{{\bm{w}}}
\newcommand{\bmx}{{\bm{x}}}
\newcommand{\bmy}{{\bm{y}}}
\newcommand{\bmz}{{\bm{z}}}
\newcommand{\bmaa}{\bm{\alpha}}
\newcommand{\bmdd}{\bm{\delta}}
\newcommand{\bmgg}{\bm{\gamma}}
\newcommand{\bfx}{\mathbf{x}}
\newcommand{\iinn}{{i\in[n]}} 
\newcommand{\jinn}{{j\in[n]}}
\newcommand{\kinn}{{k\in[n]}}
\newcommand{\ijinn}{{i,j\in[n]}}
\newcommand{\jnei}{{j\ne i}}
\newcommand{\knei}{{k\ne i}}
\newcommand{\knej}{{k\ne j}}
\newcommand{\one}{\bm{1}} 
\newcommand{\ones}{\mathbf{1}} 
\newcommand{\zeros}{\mathbf{0}} 
\DeclareMathOperator*{\argmax}{arg\,max}
\newcommand{\dd}{\mathrm{d}} 
\newcommand{\pp}{\partial} 
\newcommand{\Var}{\mathrm{Var}} 
\newcommand{\diag}{\mathrm{diag}} 
\newcommand{\ie}{\emph{i.e.}}    
\newcommand{\eg}{\emph{e.g.}}    
\newcommand{\wrt}{\emph{w.r.t.}}  
\newcommand{\iid}{\mathrm{i.i.d.}}  
\newcommand{\SW}{\mathrm{SW}}
\newcommand{\xhigh}{\bar{x}}
\newcommand{\xlow}{\underline{x}}
\newcommand{\khigh}{\bar{k}}
\newcommand{\klow}{\underline{k}}
\newcommand{\dhigh}{\bar{d}}
\newcommand{\dlow}{\underline{d}}
\newcommand{\BR}{\mathrm{BR}}
\newcommand\myx[1]{\textcolor{cyan}{[MYX: #1]}}
\renewcommand\myx[1]{}
\begin{document}
\title{Networked Public Goods Games with Heterogeneous Players \texorpdfstring{\\}{ } and Convex Costs}
%
%


\author{Anonymous Authors}
\institute{Anonymous Institute}

%
\maketitle              
\def\comment#1{}
\begin{abstract}

In the digital age, resources such as open-source software and publicly accessible databases form a crucial category of digital public goods, providing extensive benefits for Internet. 
However, these public goods' inherent non-exclusivity and non-competitiveness frequently result in under-provision, a dilemma exacerbated by individuals' tendency to free-ride. This scenario fosters both cooperation and competition among users, leading to the public goods games.

This paper investigates networked public goods games involving heterogeneous players and convex costs, focusing on the characterization of Nash Equilibrium (NE). In these games, each player can choose her effort level, representing her contributions to public goods.
Network structures  are employed to model the interactions among participants. Each player's utility consists of a concave value component, influenced by the collective efforts of all players, and a convex cost component, determined solely by the individual's own effort. 
To the best of our knowledge, this study is the first to explore the networked public goods game with convex costs.


Our research begins by examining welfare solutions aimed at maximizing social welfare and ensuring the convergence of pseudo-gradient ascent dynamics. We establish the presence of NE in this model and provide an in-depth analysis of the conditions under which NE is unique. 
We also delve into \emph{comparative statics}, an essential tool in economics, to evaluate how slight modifications in the model—interpreted as monetary redistribution—affect player utilities. In addition, we analyze a particular scenario with a predefined game structure, illustrating the practical relevance of our theoretical insights.
Overall, our research enhances the broader understanding of strategic interactions and structural dynamics in networked public goods games, with significant implications for policy design in internet economic and social networks.

\end{abstract}

\keywords{Public Good Games; Networks; Nash Equilibrium; Social Welfare; Pseudo-Gradient Dynamics; Comparative Statics.}

\section{Introduction}
\label{sec:intro}

The concept of public goods is not only a significant area of interest in economic research but also closely related to the web era.
Public goods encompass a wide range of web resources, in forms like open-source software (\eg, GitHub), public databases (\eg, MNIST), scientific technologies (\eg, papers in The WebConf), and widely accessible scientific knowledge (\eg, Wikipedia, Stack overflow).
The defining characteristics of public goods are their non-excludability, meaning all community members can freely use these resources without excluding anyone, and non-rivalry, where one person’s use does not diminish the availability for others. 
Such characteristics are particularly notable on the internet. 
Web and internet research delves into how to effectively provide and manage digital public goods to maximize social welfare. This exploration is not just theoretical but also has practical implications for policy and the development of website content, attracting an increasing number of researchers to this burgeoning field.


However, from a societal perspective, digital public goods often face challenges due to insufficient provision, a problem frequently attributed to the issue of free-riding. Consequently, each participant must decide how much effort to contribute when investing in digital public goods, aware that their efforts will also benefit others. This strategic decision-making process embodies what is known as a {\it public goods game}. This game can reveal complex interactions between cooperation and competition, as individuals shall balance their contributions against the collective benefits. Much of the prior research~\cite{public-network:bramoulle2007public} has focused on idealized models where participants are assumed to be homogeneous. However, in reality, especially in the case of digital public goods, users exhibit significant heterogeneity. For example, a specialized dictionary on Wikipedia is more beneficial to those within the relevant field. On the other hand, in the context of paper reviews, the efforts of one reviewer benefit the entire conference but may disadvantage the author of a low-quality submission. This demonstrates that the impact of a public good (or bad) can be either positive or negative, and varies across different participants.

In this paper, we are more interested in the networked public goods games, which effectively capture the social connections 
among individuals. Specifically, all participants are positioned at the vertices of the network, and the links—each weighted differently—represent the relationships and influence between any two participants~\cite{jichen:li2023altruism}.
\citet{public-network:bramoulle2007public} pioneered the study of public goods games within a network. 
In their homogeneous model, the utility functions of all players are consistently formulated as $u_i(\bmx) = f(x_i + \sum_{j \in N_i} x_j) - cx_i$, where $x_i$ is the effort level of player $i$, $f(\cdot)$ is a homogeneous benefit function applicable to all players, and the cost function is linear, characterized by a uniform unit cost $c$ for all players. 
Furthermore, this model is unweighted, as each player exhibits the same preference for both their efforts and those of others when computing the benefit.
Based on this simplified and idealistic setting, \citet{public-network:bramoulle2007public} demonstrated the existence of an equilibrium where some players exert the same maximum effort while all others engage in free riding. Moreover, they showed that those contributing positive effort form an independent set within the network. 
While later studies have explored the public goods games with heterogeneous utility functions \cite{public-network-direct-complexity:papadimitriou2021public,public-network-direct-BRD:bayer2023best}, their focus remained on the linear cost scenarios.

However, practical scenarios often feature non-linear cost functions, particularly evident in digital public goods.
For instance, the initial setup of a Wikipedia article involves adding basic facts and general information—tasks that are relatively low in cost. Yet, as the article develops, ensuring accuracy and providing in-depth analysis demand increasingly specialized knowledge, research, and citations, raising the marginal cost of contributions. Unfortunately, the predominant body of research on public goods games focuses on linear cost functions \cite{public-network:bramoulle2007public,public-network-direct:lopez2013public,public-network-direct-complexity:papadimitriou2021public,public-network-direct-BRD:bayer2023best},  and very few studies delve into the implications of non-linear cost functions.

This paper presents a novel model of networked public goods games that incorporates convex cost functions, aiming at understanding the equilibrium and dynamics in the field of digital public goods. 
Specifically, given an effort profile $\bfx=(x_1,x_2,\cdots,x_n)$, each player's payoff is determined by the net gain, which is the difference between a benefit function $f_i(k_i)$ and a cost function $c_i(x_i)$. The benefit function $f_i$ for player $i$ is both concave and strictly increasing, and it is derived from the gain $k_i$. This gain $k_i$ is computed as a weighted linear combination of the efforts of both the player and her neighbors. The cost function $c_i$, which is convex and strictly increases, depends exclusively on the player's effort $x_i$.


\subsection{Results and Techniques}
Our work is the first one to study the networked public good games with convex costs. The heterogeneity of benefit functions and cost functions lends greater generality to the networked public goods game studied in this paper.
We start by exploring the concept of welfare solutions, focusing on the maximization of social welfare and the investigation of pseudo-gradient ascent dynamics, which shows insight into the following analysis. We carefully analyze the existence and uniqueness of Nash Equilibrium (NE) across various settings, providing deep insights into the NE's structures in public goods games. Our examination extends to cases in which distinct characteristics of cost and benefit functions play a crucial role in ensuring the NE's uniqueness. Building on these foundations, we delve into comparative statics to assess the effects of subtle shifts in the model's parameters, which we regard as money redistribution, on the utilities of the players involved. Comparative statics is a crucial analytical method in economics. This element of our study illuminates how minor adjustments can significantly influence economic outcomes and player behaviors within the game.
We also study a special case, in which the game structure is pre-defined and show how these theorems can be applied to this case.

The proof of the existence of a Nash Equilibrium (NE) primarily relies on the application of the Brouwer fixed-point theorem.
Brouwer's fixed-point theorem states that any continuous function mapping a compact, convex set to itself must have a fixed point \cite{brouwer:brouwer1911abbildung}. 
It's important to note that the best-response function is continuous when the utility functions are strictly concave. The proof then proceeds through a strategic modification of the utility functions, ensuring they meet the criteria stipulated by Brouwer's fixed-point theorem.

To carve out the uniqueness of NE, we bring out the concept of near-potential game and show that under certain conditions, the NE of near-potential game is unique, and pseudo-gradient ascent dynamic will converge to this point with exponential rate. The proof constructs the discrete version of pseudo-gradient ascent dynamic and shows that it is compressive mapping, which is guaranteed to have a unique fixed point by Banach's theorem \citep{banach:banach1922operations}.
We then bridge the gap between near-potential games and public good games, showing three conditions under which we can transform the public good games into a specifically designed near-potential game while holding the NEs invariant, therefore guaranteeing the uniqueness of NE. 
We also propose the concept of game equivalence, which ensures the one-to-one relationships between the NEs of corresponding games, which can also broaden the class of games possessing unique NE.

To study the comparative statics on money redistribution, we mainly use the high-dimensional implicit function theorem.
We rewrite the conditions of Nash equilibrium $\bmx^*$ as an implicit function of infinitesimal model change $t\bmdd$ and corresponding NE $\bmx^*(t)$.
By differentiating this implicit function on $t$, we can derive the relation between $\bmx^*(t)$, $\bmdd$ and $t$.

\subsection{Related Works}

\emph{\bf Public Goods in the Web Era.}
In the web era, public goods play a crucial role in fostering collaborative contributions and maintaining online platforms. 
\citet{public_good-web-1:gallus2017fostering} demonstrates the impact of symbolic awards on volunteer retention in a public good setting like Wikipedia, where recognition and community engagement can encourage sustained contributions without direct financial incentives. 
Similarly, the challenges of knowledge-sharing in Web 2.0 communities have been framed as a public goods problem, where social dilemmas like free-riding are mitigated through enhanced group identity and pro-social behavior \citep{public_good-web-2:allen2010knowledge}.
Experimental research on cooperation in web-based public goods games further examines how network structures influence contribution behavior, with findings suggesting that contagion effects in cooperative behavior are limited to direct network neighbors \citep{public_good-web-3:suri2011cooperation}.
Moreover, the broader economic dynamics of the web are analyzed through the concept of "web goods," where users contribute content, exchange information, and interact in a socio-economic system that requires balancing open access with incentive structures for content production and infrastructure development \citep{public_good-web-4:vafopoulos2012web}.
These works collectively highlight the unique challenges and opportunities of managing public goods in the digital age, emphasizing the importance of community-driven incentives and network effects in fostering web-based cooperation.

\emph{\bf Networked Public Good Games.}
\citet{public-network:bramoulle2007public} initiated the study of public goods in a network. They studied the public good games on an unweighted, undirected network with linear cost functions and homogeneous players. Under their models, there is a unique level $k^*$ such that it's optimal for each player to make the sum efforts within her neighborhoods to be $k^*$, which greatly simplifies the analysis of the model. The authors showed that the NE of the game corresponds to the maximal independent set, where the player in the maximal independent set asserts full effort $k^*$, and the players outside free-ride. 

There are many other works following this literature, see \citep{public-network-follow:bramoulle2014strategic, public-network-follow:boncinelli2012stochastic,public-network:allouch2015private,public-network:elliott2021network}.
\citet{public-network-follow:bramoulle2014strategic} extended the model to the imperfectly substituted
public goods case, and proved the existence and uniqueness of Nash equilibrium, under the condition of sufficiently small lowest eigenvalue of the graph matrix.
\citet{public-network:allouch2015private} differentiated the provision of public goods and private goods, and their results of existence and uniqueness of Nash equilibrium also rely on the lowest eigenvalue of the graph matrix.
\citet{public-network-direct:lopez2013public} began with the studies of public good games in directed networks, by discussing both the static model and the dynamic model. To be specific, in the static model, all players are situated within a fixed network where they choose their actions simultaneously. \citet{public-network-direct:lopez2013public} demonstrated that the structure of Nash equilibria correlates with the maximal independent set. In contrast, the dynamic model is characterized by a dynamic sampling process, where agents periodically sample a subset of other agents and base their decisions on a myopic-best response. The author established the existence of a unique globally stable proportion of public good providers in this model. \citet{public-network-direct-BRD:bayer2023best} studied the convergence of best response dynamic on the public good games in directed networks.

A significant networked public goods game variant considers indivisible goods, where players can only make binary decisions\cite{galeotti2010network}. 
Building upon this binary networked public goods (BNPG) game model, \citet{yu2020computing} introduced the algorithmic inquiry of determining the existence of pure-strategy Nash equilibrium (PSNE). 
Specifically, they investigated the existence of PSNE in the BNPG game and proved that it is NP-hard in both homogeneous and heterogeneous settings.
The computational complexity of public goods games with a network structure, such as tree or clique~\cite{yang2020refined, maiti2024parameterized}, and regular graph~\cite{feldman2013pricing} has also been extensively studied.
Papadimitriou and Peng\cite{public-network-direct-complexity:papadimitriou2021public} proved that finding an approximate NE of the public good games in directed networks is PPAD-hard, even if the utility is in a summation form.
Subsequently, \citet{public-uncertainty:gilboa2022complexity} modeled players as different patterns and showed that the existence of PSNE on some non-trivial patterns is NP-complete, while a polynomial time algorithm exists for some specific patterns.
In addition, \citet{klimm2023complexity} further demonstrated the complexity results of the BNPG game on undirected graphs with different utility patterns to be NP-hard. 
They also showed that computing equilibrium in games with integer weight edges is PLS-complete.

\emph{\bf Continuous-time Public Good Games.}
A branch of the literature on public goods focuses on studying the dynamic provision of public goods in continuous time. \citet{public-dynamic:fershtman1991dynamic} was the first to explore this problem. They proposed two equilibrium concepts: the open-loop equilibrium and the feedback equilibrium, showing that in the feedback equilibrium, the players' utilities are lower than in the open-loop equilibrium. This result is derived under the linear strategy assumption of the feedback equilibrium, as the feedback equilibrium is not generally unique. Later, \citet{public-dynamic-nonlinear:wirl1996dynamic} discovered that if non-linear strategies are allowed in the feedback equilibrium, players' utilities can be higher in some feedback equilibria than in the open-loop equilibrium. \citet{public-dynamic-CES:fujiwara2009dynamic} generalized these findings to more general utility functions and confirmed that the results still hold. \citet{public-dynamic-uncertain:wang2010dynamic} extended this work by considering environments with uncertainty.

Although these studies present findings in dynamic scenarios, they generally assume homogeneity among players in terms of utility functions (both gains and costs) and interpersonal relationships and thus do not consider network effects. To the best of our knowledge, no previous research has simultaneously explored the dynamic provision of public goods with heterogeneous players.

\emph{\bf Concave Games.}
\citet{concave_game-initial:rosen1965existence}
firstly introduced the concept of concave games, in which the utility function of each player is concave to her strategy. In this paper, \citet{concave_game-initial:rosen1965existence} provided a sufficient condition for such games to have a unique equilibrium and introduced a differential equation that converges to this equilibrium. 
Because of the foundational results of \citet{concave_game-initial:rosen1965existence}, several works have extended the study of concave games in various settings, such as learning perspective of equilibrium in concave games \citep{concave_game-learning:nesterov2009primal, concave_game-learning:mertikopoulos2019learning, concave_game-learning:bravo2018bandit}, equilibrium concept in concave games \citep{concave_game-concept:forgo1994existence, concave_game-concept:ui2008correlated, concave_game-concept:goktas2021convex}.
However, limited research has applied the concave game framework to public goods scenarios. Our work is pioneering in applying the convex game framework to public goods games. We demonstrate that public goods games can be treated as a specific type of concave game, called a near-potential game, where the potential function is meticulously designed for diverse scenarios. The uniqueness of equilibrium in near-potential games, therefore, directly supports the uniqueness of equilibrium in public goods games.

\section{Models}
\label{sec:model}

Consider a community with $n$ players playing a public good game.
Each player $i$ needs to decide her effort $x_i \in [\underline{x}_i, \bar{x}_i] \coloneqq X_i$ to invest the public goods, where $\{\underline{x}_i, \bar{x}_i\}_\iinn$ are predetermined and publicly known. 
Let $\bmx = (x_1,...,x_n)$ be the effort profile of all players, and $\bmx_{-i}$ be the effort profile of all players without player $i$. 
Therefore, $(y_i,\bmx_{-i})$ is the effort profile that player $i$ chooses $y_i$ and other players keep their choices the same as $\bmx_{-i}$.
Similarly, define $X = \times_\iinn X_i$ and $X_{-i} = \times_\jnei X_j$.
Let us denote $W=\{w_{ij}\}_{i,j\in [n]}$
as the matrix, in which $w_{ij}$ represents the marginal gain of player $i$
from player $j$'s effort. We normalize $W$
such that $w_{ii}=1$ for any $i\in [n]$
without loss of generality. Our model is more general as it imposes no additional constraints on the network, \eg, $w_{ij}\in [0,1]$ or $w_{ij} = w_{ji}$. 
Any $w_{ij}\in \bbR$ are permitted, provided that $w_{ij}=0$ if there is no edge between $i$ and $j$.
Let $k_i\in K_i$ be the total gain of player $i$ and $\bmk = (k_1,...,k_n) \in K$ be the gain profile of all players. Then we have $k_i = \sum_\jinn w_{ij} x_j$, \ie, the gain of player $i$ linearly depends on her own and other players' efforts, weighted by $\bmw_{i}=(w_{ij})_{j\in [n]}$. Therefore, $\bmk = W \bmx$. 
In addition, we assume $K_i = [\klow_i, \khigh_i]$, where $\klow_i$ and $\khigh_i$ are the minimum and maximum possible gain for player $i$ for ease of representation, respectively\footnote{Since $X_i$ is bounded for all $i$, $\klow_i$ and $\khigh_i$ are well-defined. In fact, we have the explicit expression that $\klow_i = \sum_\jinn \one\{w_{ij} > 0 \} w_{ij}\xlow_j + \one\{w_{ij} < 0 \} w_{ij} \xhigh_j$ and $\khigh_i = \sum_\jinn \one\{w_{ij} > 0 \} w_{ij} \xhigh_j + \one\{w_{ij} < 0 \} w_{ij} \xlow_j$}.
Similarly, we use $K = \times_\iinn K_i$ and define $K_{-i} = \times_\jnei K_j$.

Given an effort profile ${\bmx}=(x_1,x_2,\cdots,x_n)$, each player $i$ has utility function $u_i(\bmx) = f_i(k_i) - c_i(x_i)$, where $f_i$ is a concave and strictly increasing function on $K_i$, and $c_i$ is a convex and strictly increasing function on $X_i$. That is, $f''_i(x)\le 0, f'_i(x)>0$, $c'_i(x) >0, c''_i(x) \ge 0$, meaning that $f_i$ and $c_i$ are twice differentiable. 
Thus, a networked public goods game $G$ is represented by a four-tuple, $G = \langle \{f_i\}_\iinn, \{c_i\}_\iinn, \{X_i\}_\iinn, W \rangle$, where $K_i$ is omitted since it can be uniquely determined from $G$.

The utility function  $u_i(\bmx) = f_i(k_i) - c_i(x_i)$ indicates that each player's utility is composed of two parts, the value part $f_i(k_i)$ and the cost part $c_i(x_i)$. Clearly, the value part depends on her gains $k_i$, which are positively or negatively affected by other players' efforts and the cost part only depends on her effort. Notice that one's effort will increase or decrease others' gains, and so will their utilities. Therefore players' efforts can be regarded as public goods (bads).

For the sake of convenience, we use $\bbR_+$ and  $\bbR_{++}$ to denote the set of non-negative real numbers and the set of (strict) positive real numbers, respectively.

To begin with, we introduce some definitions, which are useful for the following analysis.

\begin{definition}[$\alpha$-Lipschitzness]
\label{def:Lipschitz}
A function $g(x): X \to \bbR$ is $\alpha$-Lipschitz ($\alpha \in \bbR_+$) on $x \in X \subseteq \bbR^d$, if
\begin{align*}
    |g(x) - g(y)| \le \alpha \| x - y \|
\end{align*}
for all $x,y \in X$.
\end{definition}

\begin{definition}[$c$-concavity]
\label{def:concave}
A differentiable function $g(x): X \to \bbR$ is $c$-(strongly) concave ($c \in \bbR_+$) on $x \in X \subseteq \bbR^d$, if $X$ is a convex set and,
\begin{equation}
\label{eq:concave}
    g(y) \le g(x) + \langle y-x, \nabla g(x)\rangle - \frac{c}{2}\| y - x \|^2, \quad \forall x,y\in X.
\end{equation}
\end{definition}

Intuitively, we may understand the definition to be that $g(\cdot)$ has a directional curvature
less than or equal to $-c$ at any point $x$ inside the convex set $X$ to any direction $y - x$.

\begin{definition}[$\bmaa$-scaled Pseudo-Gradient Ascend Dynamic]
\label{def:BRD}
Let $\{u_i(\bmx)\}_\iinn$ be the utility functions of players in an $n$-player game and $\bmx(0)$ be an arbitrary initial strategy profile. 
Consider a dynamic of players' strategies, called $\bmaa$-scaled pseudo-gradient ascent dynamic, which describes the players' behaviors over time.
The $\bmaa$-scaled pseudo-gradient ascent dynamic
$\bmx(t) = (x_1(t),\dots, x_n(t))$
with updating speed $\bmaa \in \bbR_{++}^n$ (possibly $\bmaa \ne \ones$) is a system of differential equations, defined as
\begin{align*}
    \frac{\dd x_i}{\dd t}(t)
    = \alpha_i \frac{\partial u_i}{\partial x_i}
    (\bmx(t))
    ,\quad \forall \iinn.
\end{align*}
    
\end{definition}

Here the vector $(\frac{\pp u_i}{\pp x_i}(\bmx))_\iinn$ is called the pseudo-gradient of the game $(u_i(\bmx))_\iinn$ at the point $\bmx$. 
In this dynamic, each player updates her strategy, taking the direction as the gradient of her utility, scaled by vector $\bmaa$.

 
We finally present a property of $c$-concavity function for use in the subsequent theorems.

\begin{restatable}{lemma}{lemConcave}
\label{lem:concave}
Assume $g: X \to \bbR$, where $X\subseteq \bbR^d$ is a convex and closed set, and $g$ is a differential $c_0$-concave function. Define $x^*$ be the maximum point of $g(x)$ on $X$, then,
\begin{align*}
    2 c_0 \left( g(x^*) - g(x) \right) \le \| \nabla g(x) \|^2\quad \forall x\in X.
\end{align*}
\end{restatable}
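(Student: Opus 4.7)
The plan is to apply the defining inequality of $c_0$-concavity directly, using $x$ as the base point and $x^*$ as the target, and then reduce the problem to maximizing a simple one-dimensional quadratic in the scalar quantity $\|x^* - x\|$. The result then follows by Cauchy--Schwarz and elementary calculus, so I do not expect any serious obstacle; the main care needed is just in handling the degenerate case $c_0 = 0$ and noting that the maximizer $x^*$ is well defined.

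More concretely, I would first invoke Definition~\ref{def:concave} with the roles $(x,y) \mapsto (x, x^*)$ to obtain
\begin{equation*}
    g(x^*) \;\le\; g(x) + \langle x^* - x,\, \nabla g(x)\rangle - \tfrac{c_0}{2}\|x^* - x\|^2,
\end{equation*}
which rearranges to an upper bound on $g(x^*) - g(x)$. The next step is to bound the inner product by the Cauchy--Schwarz inequality, $\langle x^* - x, \nabla g(x)\rangle \le \|x^*-x\|\cdot\|\nabla g(x)\|$, so that the right-hand side depends on $x^* - x$ only through the scalar $r \coloneqq \|x^* - x\| \ge 0$:
\begin{equation*}
    g(x^*) - g(x) \;\le\; r\,\|\nabla g(x)\| - \tfrac{c_0}{2} r^2.
\end{equation*}

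I would then maximize the right-hand side over $r \ge 0$. Treating the bound as a quadratic in $r$, setting its derivative to zero yields $r^\star = \|\nabla g(x)\|/c_0$ (for $c_0 > 0$), and substituting back gives the maximum value $\|\nabla g(x)\|^2/(2c_0)$. This immediately yields
\begin{equation*}
    g(x^*) - g(x) \;\le\; \frac{\|\nabla g(x)\|^2}{2c_0},
\end{equation*}
which rearranges to the claimed inequality $2c_0(g(x^*) - g(x)) \le \|\nabla g(x)\|^2$.

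Finally, I would briefly address two edge cases. First, if $c_0 = 0$, the claim reduces to $0 \le \|\nabla g(x)\|^2$, which is trivial; so the substantive case is $c_0 > 0$. Second, the existence of a maximizer $x^*$ on the closed convex set $X$ is guaranteed when $c_0 > 0$, since strong concavity forces $g(y) \to -\infty$ as $\|y\| \to \infty$ (when $X$ is unbounded), making the supremum attained; uniqueness of $x^*$ also follows, though it is not needed for the bound. Noting these cases completes the argument, and I expect the entire proof to be short and free of technical obstacles.
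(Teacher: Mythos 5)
Your proof is correct and takes essentially the same approach as the paper: both start from the $c_0$-concavity inequality at base point $x$ and bound $g(x^*)-g(x)$ by the unconstrained maximum of the resulting quadratic, which equals $\frac{1}{2c_0}\|\nabla g(x)\|^2$. The only cosmetic difference is that you reduce to a one-dimensional quadratic in $r=\|x^*-x\|$ via Cauchy--Schwarz, whereas the paper maximizes the quadratic directly over $y\in\bbR^d$ at $y^*=x+\frac{1}{c_0}\nabla g(x)$; both computations give the identical bound.
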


For completeness, we only present self-contained proofs for all results in this paper. The remaining proofs are deferred to the Appendix, due to the space limits.

\subsection{Welfare Solutions}
\label{subsec:welfare}
We first consider the social optimal solution. This concept is characterized by the social welfare $\SW(\bmx)$, which is defined as the sum of all players' utilities: $\SW(\bmx) = \sum_\iinn (f_i(k_i) - c_i(x_i))$. The social optimal solution is the effort profile $\bmx^*$ that maximizes $\SW(\bmx)$.

Because of the concavity of the social welfare function and the convex, bounded and closed domain $X$, the existence of the social optimal solution is guaranteed.
However, the social optimal solution may not always have an explicit expression. This limitation motivates us to explore the gradient flow as a dynamic process to achieve the social optimal solution: 
\begin{equation}
\label{eq:SW:BRD}
    \frac{\dd x_i}{\dd t}(t) = \frac{\partial \SW}{\partial x_i}(\bmx(t)),\quad \iinn.
\end{equation}
It is well-established that the gradient flow converges to a stable point, and in the case of a concave function, any stable point corresponds to a global maximum. Specifically, we have the following theorem:
\begin{restatable}{theorem}{thmSWBRD}
\label{thm:SW:BRD}
    The best-response dynamic \cref{eq:SW:BRD} converges to the social optimal solution with linear rate, \ie,
    \begin{align*}
        SW(\bmx^*) - SW(\bmx(t)) \le \frac{c}{t},\quad \forall t>0
    \end{align*}
    for some $c > 0$.

    Moreover, if at least one of the following conditions holds:
    \begin{itemize}
        \item[(1)] all cost functions $c_i(x)$ are $c_0$-convex for some $c_0>0$;
        \item[(2)] all value functions $f_i(k)$ are $c_0$-concave for some $c_0>0$;
    \end{itemize}
    then, the best-response dynamic converges to the social optimal solution with exponential rate, \ie,
    \begin{align*}
        SW(\bmx^*) - SW(\bmx(t)) = O(\exp(-c\cdot t))
    \end{align*}
    for some $c>0$.
\end{restatable}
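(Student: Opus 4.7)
The plan is to treat \cref{eq:SW:BRD} as gradient ascent on the potential $\SW(\bmx)$. Since each $f_i(k_i)=f_i(\bmw_i^\top\bmx)$ is concave (concave composed with a linear map) and each $-c_i(x_i)$ is concave, $\SW$ is concave on the compact convex box $X$, so a maximizer $\bmx^*$ exists. I will use the Lyapunov function
\begin{align*}
V(t)\;\coloneqq\;\SW(\bmx^*)-\SW(\bmx(t))\;\ge\;0,
\end{align*}
whose time derivative along the flow satisfies, by the chain rule and \cref{eq:SW:BRD},
\begin{align*}
\frac{\dd V}{\dd t}(t)\;=\;-\sum_\iinn\Bigl(\tfrac{\pp \SW}{\pp x_i}(\bmx(t))\Bigr)^{2}\;=\;-\|\nabla\SW(\bmx(t))\|^2.
\end{align*}
Any boundary contact is handled by interpreting \cref{eq:SW:BRD} as a projected flow, which does not affect the rates since projection only removes outward components of the gradient.

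For the $O(1/t)$ rate, I would combine concavity of $\SW$ with the Cauchy--Schwarz inequality:
\begin{align*}
V(t)\;\le\;\langle\bmx^*-\bmx(t),\nabla\SW(\bmx(t))\rangle\;\le\;D\,\|\nabla\SW(\bmx(t))\|,
\end{align*}
where $D\coloneqq\mathrm{diam}(X)<\infty$ because $X$ is a bounded box. Substituting $\|\nabla\SW\|^2\ge V^2/D^2$ into the Lyapunov identity gives the differential inequality $\dot V\le -V^2/D^2$, which integrates (via the substitution $U=1/V$) to $V(t)\le 1/(1/V(0)+t/D^2)\le D^2/t$.

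For the exponential rate, the target is to upgrade $\SW$ to $c$-strongly concave for some $c>0$ under either hypothesis, after which \cref{lem:concave} yields $\|\nabla\SW(\bmx(t))\|^2\ge 2c\,V(t)$; combining with the Lyapunov identity gives $\dot V\le -2cV$, whence Gr\"onwall delivers $V(t)\le V(0)e^{-2ct}$. Under (1), the separable cost term $-\sum_\iinn c_i(x_i)$ has Hessian $\diag(-c_i''(x_i))\preceq -c_0 I$, and adding the still-concave $\sum_\iinn f_i(k_i)$ preserves $c_0$-strong concavity of $\SW$. Under (2), the Hessian of $\sum_\iinn f_i(k_i)$ in $\bmx$-space equals $\sum_\iinn f_i''(k_i)\,\bmw_i\bmw_i^\top \preceq -c_0\,W^\top W$, which is strictly negative definite with constant $c_0\lambda_{\min}(W^\top W)$ provided $W$ has full column rank.

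The main obstacle I anticipate is this rank issue in case (2): strong concavity of $f_i$ in the $\bmk$-coordinates does not automatically descend to $\bmx$-space when $W$ has a nontrivial kernel. The cleanest remedy is to impose full column rank of $W$ as an implicit nondegeneracy assumption and flag it; otherwise one must combine the $\bmk$-space strong concavity with the (merely) strict concavity from the $c_i$'s to pin down the kernel direction, which is a more delicate Lyapunov argument. A secondary, routine check is the regularity of $V$ along the constrained trajectory, which follows from twice-differentiability of $f_i,c_i$ and standard projected-gradient calculus.
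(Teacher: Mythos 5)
Your plan is correct and follows the same Lyapunov skeleton as the paper's proof: the energy $V(t)=\SW(\bmx^*)-\SW(\bmx(t))$ with $\dot V=-\|\nabla\SW(\bmx(t))\|^2$ along the flow, and, in the strongly concave case, the gradient-domination inequality of \cref{lem:concave} followed by Gr\"onwall to get $V(t)\le V(0)e^{-2c_0 t}$. The one genuinely different step is the $O(1/t)$ rate: you close the argument with $V\le D\|\nabla\SW\|$ (concavity plus Cauchy--Schwarz over the bounded box) and integrate the resulting $\dot V\le -V^2/D^2$, whereas the paper differentiates the auxiliary functional $J(t)=tV(t)+\frac{1}{2}\|\bmx^*-\bmx(t)\|^2$, shows $\dot J\le 0$, and reads off $V(t)\le\|\bmx^*-\bmx(0)\|^2/(2t)$. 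Both are standard and give the same rate; yours pays with the diameter constant $D=\mathrm{diam}(X)$ (harmless, since $X$ is compact), while the paper's constant depends only on the initial distance to the optimum. Neither treatment of the boundary is rigorous, but your projected-flow remark is at least explicit about the issue.

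The obstacle you anticipate in case (2) is real, and it is in fact a gap in the paper's own proof rather than in yours. The paper argues that $c_0$-concavity of $\SW$ in each coordinate $x_i$ separately (which does follow from $w_{ii}=1$) implies $c_0$-concavity of $\SW$ jointly in $\bmx$; this implication is false in general --- for instance $-(x_1-x_2)^2$ is $2$-strongly concave in each coordinate but only weakly concave jointly, and this is realized by a singular $W$ with unit diagonal. As you observe, the value terms contribute a Hessian $W^\top\diag(\bmf''(\bmk))W\preceq -c_0\,W^\top W$, so condition (2) yields joint strong concavity (and hence the exponential rate via \cref{lem:concave}) only when $W$ has full column rank, with constant $c_0\lambda_{\min}(W^\top W)$. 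Your case (1) argument is clean and avoids the issue entirely because the cost Hessian is diagonal. Flagging the rank condition, or adding it as an explicit hypothesis for case (2), is the right call.
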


Theorem \ref{thm:SW:BRD} establishes that the pseudo-gradient ascent dynamic with homogeneous utility function ($\SW(\bmx)$ in this case), regarded as a continuous-time algorithm, converges to the social optimal point. 
While this result may not be surprising, the technical insight in this proof is helpful for later proofs of the uniqueness of NE.

\section{Equilibrium Solutions of Public Good Games}
\label{sec:equilibrium}

In this section, we establish the existence and uniqueness results of equilibrium solutions in public good games. An effort profile is $\bmx$ as an (pure strategy) NE, if no player can unilaterally increase her utility by changing her effort. Formally, $\bmx$ is an NE if, for any player $i$ and any alternative effort $x_i'$, we have
\begin{align}
\label{eq:static:NE}
    u_i(x_i', \bmx_{-i}) \le u_i(x_i, \bmx_{-i}) \quad \forall x'_i \in [\underline{x}_i, \bar{x}_i].
\end{align}


\subsection{Existence of Nash Equilibrium}

Generally, an (pure strategy) NE may not exist in normal-form games. Fortunately, the following theorem states that an NE always exists in the networked public good games studied in this paper. 

\begin{restatable}{theorem}{thmNEExist}
\label{thm:NE:exist}
    In the public good game $G = (\{f_i\}_\iinn, \{c_i\}_\iinn,$ $\{X_i\}_\iinn, W)$, an (pure strategy) NE always exists.
\end{restatable}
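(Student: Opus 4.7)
The plan is to follow the Nash-style fixed-point route hinted at in the introduction: smooth each player's objective so the joint best-response map becomes single-valued and continuous, apply Brouwer's theorem, and then pass to the limit to recover an NE of $G$. First I would record the structural obstacle. For fixed $\bmx_{-i}$, differentiating $u_i(\bmx) = f_i(k_i) - c_i(x_i)$ twice in $x_i$ (using $w_{ii}=1$) gives $\partial^2 u_i/\partial x_i^2 = f_i''(k_i) - c_i''(x_i) \le 0$, so $u_i(\cdot,\bmx_{-i})$ is concave on $X_i = [\xlow_i,\xhigh_i]$ but not necessarily strictly so. Hence the raw best-response correspondence $\BR_i(\bmx_{-i}) = \argmax_{x_i\in X_i} u_i(x_i,\bmx_{-i})$ is a non-empty, convex, compact subset of $X_i$, but not guaranteed to be a singleton, so Brouwer does not apply directly to the joint best-response map.

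To sidestep this, I would introduce the perturbed game $G^\epsilon$ with utilities $u_i^\epsilon(\bmx) \coloneqq u_i(\bmx) - \epsilon x_i^2$ for a small $\epsilon>0$. A second-derivative check gives $\partial^2 u_i^\epsilon/\partial x_i^2 \le -2\epsilon < 0$, so each $u_i^\epsilon(\cdot,\bmx_{-i})$ is strictly concave on the convex compact set $X_i$, and $\BR_i^\epsilon(\bmx_{-i}) \coloneqq \argmax_{x_i\in X_i} u_i^\epsilon(x_i,\bmx_{-i})$ is therefore a single point. Continuity of $u_i^\epsilon$ on $X$ combined with Berge's maximum theorem then implies that the joint map $\bmx \mapsto (\BR_1^\epsilon(\bmx_{-1}),\dots,\BR_n^\epsilon(\bmx_{-n}))$ is a continuous self-map of $X$. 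Since $X = \prod_\iinn [\xlow_i,\xhigh_i]$ is non-empty, convex, and compact in $\bbR^n$, Brouwer's fixed-point theorem yields a fixed point $\bmx^\epsilon \in X$, which is by construction an NE of $G^\epsilon$.

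The last step is to send $\epsilon\downarrow 0$. Compactness of $X$ lets me extract a subsequence $\epsilon_m\downarrow 0$ along which $\bmx^{\epsilon_m}\to \bmx^*\in X$. For any player $i$ and any $y_i\in X_i$, the NE inequality for $G^{\epsilon_m}$ reads
\begin{align*}
u_i(y_i,\bmx^{\epsilon_m}_{-i}) - \epsilon_m y_i^2 \le u_i(x_i^{\epsilon_m},\bmx^{\epsilon_m}_{-i}) - \epsilon_m (x_i^{\epsilon_m})^2,
\end{align*}
and sending $m\to\infty$, together with continuity of $u_i$ on the compact set $X$, yields $u_i(y_i,\bmx^*_{-i}) \le u_i(x_i^*,\bmx^*_{-i})$, which is exactly the NE condition for $G$. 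The only genuine obstacle is the possible failure of strict concavity of $u_i$ in $x_i$; the quadratic perturbation $\epsilon x_i^2$ is the cleanest fix, after which the existence argument reduces to standard compactness, continuity, and fixed-point tools.
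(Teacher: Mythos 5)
Your proposal is correct and follows essentially the same route as the paper: both perturb the game by a quadratic term ($\epsilon x_i^2$ subtracted from utility, equivalently added to cost) to make each player's objective strictly concave, apply Brouwer to the resulting continuous single-valued best-response map, and then extract a convergent subsequence of perturbed equilibria and pass to the limit in the NE inequality. The only cosmetic difference is that you invoke Berge's maximum theorem for continuity of the best response, whereas the paper proves it directly from strong concavity via a contradiction argument; both are valid.
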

\begin{proof}[Proof Sketch of \cref{thm:NE:exist}]
Let us construct the best-response function $\BR(\bmx)$ for all players. If it is continuous, then by Brouwer's fixed point theorem \citep{brouwer:brouwer1911abbildung}, there is a fixed point, which is also the NE of the game.

However, this is not always the case, since the best response may be discontinuous and even not a singleton set. To address this issue, we modify the cost function to be an $\alpha$-convex function. This results in an $\alpha$-modified game, whose best-response function is continuous.

As long as the $\alpha$-modified game has an NE $\bmx^*_\alpha$, we first let $\alpha \to 0$, then by compactness of $X$, we claim that there is an accumulation point $\bmx^*$ that is the limitation of $\bmx^*_{\alpha_k}$ for a sequence $\alpha_k\to 0$. 
To establish the desirable result in this theorem, we continue proving that $\bmx^*$ is the NE of the original game, given $\bmx^*_{\alpha_k}$ is the NE of $\alpha_k$-modified game for all $k$.
This involves carefully taking limits through several steps.
\end{proof}

Similar to normal-form games, NE in public goods games may not be unique. This is elaborated upon in the example below.

\begin{figure}[t]
\centering
\begin{subfigure}{0.3\textwidth}
    \includegraphics[width=\textwidth]{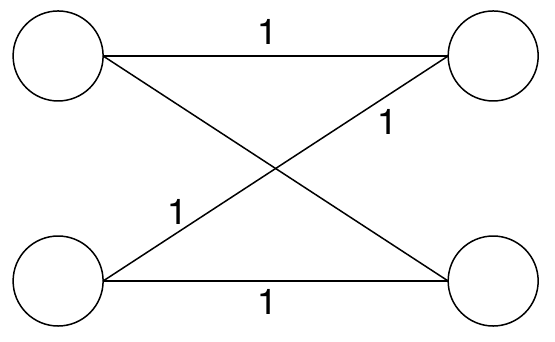}
    \label{fig:NE:nonunique-1}
    \caption{}
\end{subfigure}
\begin{subfigure}{0.3\textwidth}
    \includegraphics[width=\textwidth]{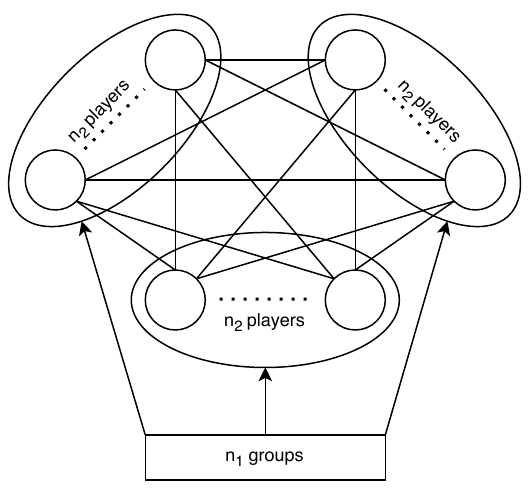}
    \label{fig:NE:nonunique-2}
    \caption{}
\end{subfigure}
\caption{Examples of non-unique NE in public good games. (a): There are four players on two sides, with two players in each side. (b) There are $n_1\times n_2$ players in $n_1$ groups, with $n_2$ players in each group.}
\label{fig:NE:nonunique}
\end{figure}

\begin{example}
\label{eg:NE:nonunique}
Consider a public good game containing four players, see \cref{fig:NE:nonunique}-(a) . The marginal gain is $1$ between one player from the left side and the other from the right side, and $0$ otherwise. We specify homogeneous utility functions and action spaces for all players. The action space is specified as $[0,1]$, while the only two constraints for utility functions are: 
\begin{align*}
    f'(1) \ge c'(1)~~\mbox{and}~~
    f'(2) \le c'(0).
\end{align*}
It's straightforward to verify that players on one side exert full effort, i.e., $x_i = 1$,
while players on the opposite side free ride, i.e., $x_i = 0$, constitutes a Nash Equilibrium (NE). Thus, there are at least two NEs in this game. This example can be readily extended to a scenario involving $n_1\times n_2$ players, distributed into $n_1$ groups with $n_2$ players in each group. All pairs of players from different groups are connected, see \cref{fig:NE:nonunique}-(b). The second condition then becomes $f'(n_2) \le c'(0)$.
\end{example}

Example \ref{eg:NE:nonunique} motivates us to investigate the scenarios in which the NE is unique.

\subsection{Uniqueness of Nash Equilibrium}
\label{subsec:NE:unique}

In this section, we explore the conditions under which NE of the public good game is unique.
We begin by introducing a necessary lemma along with some definitions that will be frequently utilized in the subsequent theorems.

\begin{definition}[$(\gamma, \sigma)$-closeness]
\label{def:closeness}
    A function $g(x): X \to \bbR$ is $(\gamma, \sigma)$-close ($\gamma, \sigma \in \bbR_{+}$) to a function $f(x): X \to \bbR$, where $X \subseteq \bbR^k$, if $\gamma \nabla_x g(x) - \nabla_x f(x)$ is $\sigma$-Lipschitz on $x$.
\end{definition}

\begin{definition}[Near-potential Game]
\label{def:near-potential}
    Consider a game containing $n$ players. Denote $x_i \in X_i \subset \bbR$ as the action of player $i$ and $u_i(\bmx)$ as the utility function of player $i$ given the joint action $\bmx$. 

    We say that a game is a $(\bmgg,\Sigma)$-near-potential ($\bmgg \in \bbR_{++}^n, \Sigma\in\bbR_+^{n\times n}$) game \wrt~  a potential function $u(\bmx)$, if for two players $\ijinn$ (it could be $i = j$), we have that $u_i(\bmx)$ is $(\gamma_i, \sigma_{ij})$-close to $u(\bmx)$ on the domain $X_j$, assuming that $\bmx_{-j}$ is fixed, \ie, 
    \begin{align*}
        \gamma_i \frac{\pp u_i}{\pp x_i}(x_j,\bmx_{-j}) - \frac{\pp u}{\pp x_i}(x_j,\bmx_{-j})
    \end{align*}
    is $\sigma_{ij}$-Lipschitz on $x_j$ for all fixed $\bmx_{-j}\in X_{-j}$, where $\Sigma = \{\sigma_{ij}\}_\ijinn$.
\end{definition}

\begin{restatable}{lemma}{lemNearPotential}
\label{lem:near-potential}
For a $(\bmgg,\Sigma)$-near-potential game  \wrt ~potential function $u(\bmx)$, if the following conditions hold
\begin{itemize}
    \item[(1)] $u(\bmx)$ is $c$-strongly concave on $\bmx$;
    \item[(2)] $c > \sigma_{max}(\Sigma)$,
\end{itemize}
where $\sigma_{max}(\cdot)$ represents the maximum singular value of a matrix,
then the near-potential game has a unique NE $\bmx^*$.
Moreover, the $\bmgg$-scaled pseudo-gradient ascent dynamic $\bmx(t)$ with arbitrary initial point $\bmx(0)$ converges to the NE with an exponential rate, \ie, there is $c_0>0$ such that
\begin{align*}
    \| \bmx(t) - \bmx^* \| = O(\exp(-c_0\cdot t)).
\end{align*}

\end{restatable}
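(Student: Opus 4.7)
The plan is to exploit the near-potential structure to show that the $\bmgg$-scaled pseudo-gradient $\bmg(\bmx) := \bigl(\gamma_i \tfrac{\pp u_i}{\pp x_i}(\bmx)\bigr)_\iinn$ is a small Lipschitz perturbation of $\nabla u(\bmx)$, which, combined with the $c$-strong concavity of $u$, yields strong monotonicity of $-\bmg$ with modulus $\mu := c - \sigma_{\max}(\Sigma) > 0$. From there, uniqueness of the NE follows from a contraction-mapping argument on a discretized projected flow (Banach's theorem), and exponential convergence of the continuous dynamic follows from a standard Lyapunov computation.

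First I would set $\bme(\bmx) := \bmg(\bmx) - \nabla u(\bmx)$ and control $\bme$ by a coordinate-by-coordinate telescoping: interpolating from $\bmy$ to $\bmx$ one coordinate at a time, the near-potential condition of \cref{def:near-potential} (with $\bmx_{-j}$ fixed at each step) contributes at most $\sigma_{ij}|x_j - y_j|$ to $|e_i(\bmx) - e_i(\bmy)|$ at the $j$-th step. Summing gives $|e_i(\bmx) - e_i(\bmy)| \le \sum_\jinn \sigma_{ij}|x_j - y_j|$, and setting $\bmv := (|x_j - y_j|)_\jinn$ then yields $\|\bme(\bmx) - \bme(\bmy)\| \le \|\Sigma\bmv\| \le \sigma_{\max}(\Sigma)\|\bmv\| = \sigma_{\max}(\Sigma)\|\bmx - \bmy\|$. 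Combined with the standard strong-concavity inequality $\langle \nabla u(\bmx) - \nabla u(\bmy), \bmx - \bmy\rangle \le -c\|\bmx - \bmy\|^2$, this produces the strong-monotonicity estimate
\begin{align*}
\bigl\langle \bmg(\bmx) - \bmg(\bmy),\; \bmx - \bmy \bigr\rangle \le -\mu\,\|\bmx - \bmy\|^2.
\end{align*}

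For uniqueness I would then study the discretized projected update $T_\eta(\bmx) := \proj_X\!\bigl(\bmx + \eta\,\bmg(\bmx)\bigr)$. Compactness of $X$ provides a Lipschitz constant $L$ for $\bmg$, and pairing monotonicity with the non-expansiveness of $\proj_X$ gives $\|T_\eta(\bmx) - T_\eta(\bmy)\|^2 \le (1 - 2\eta\mu + \eta^2 L^2)\|\bmx - \bmy\|^2$, a strict contraction for $\eta \in (0, 2\mu/L^2)$. Banach's theorem then produces a unique fixed point $\bmx^*$, which coincides with the unique NE via the first-order VI characterization $\langle \bmg(\bmx^*), \bmy - \bmx^*\rangle \le 0$ for all $\bmy \in X$ (using concavity of each $u_i$ in $x_i$, which holds in all applications of this lemma in the paper). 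For the exponential-rate bound on the continuous flow, take $V(\bmx) := \tfrac{1}{2}\|\bmx - \bmx^*\|^2$; strong monotonicity then gives $\dot V(t) \le -2\mu\,V(t)$, and Gr\"onwall yields $\|\bmx(t) - \bmx^*\| \le e^{-\mu t}\|\bmx(0) - \bmx^*\|$, so $c_0 = \mu$ works.

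The hard part will be the entry-wise-to-spectral step converting the coordinate-wise Lipschitz bounds $\sigma_{ij}$ into the global estimate $\|\bme(\bmx) - \bme(\bmy)\| \le \sigma_{\max}(\Sigma)\|\bmx - \bmy\|$: because \cref{def:near-potential} describes only single-coordinate variations, the telescoping must preserve the ``$\bmx_{-j}$ fixed'' structure at every intermediate configuration, and the resulting entry-wise majorization must be lifted via the non-negative-vector inequality $\|\Sigma\bmv\| \le \sigma_{\max}(\Sigma)\|\bmv\|$ rather than by a direct singular-value comparison on the non-symmetric difference $\bmg - \nabla u$. A secondary subtlety is the boundary of $X$ for the continuous dynamic, which is handled either by projecting the flow or by noting that once $\bmx^* \in X$ is produced, the Lyapunov decay confines the trajectory to a forward-invariant sublevel set of $V$.
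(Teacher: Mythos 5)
Your proposal is correct and follows essentially the same route as the paper's self-contained proof: both decompose the $\bmgg$-scaled pseudo-gradient as $\nabla u$ plus a perturbation whose coordinatewise Lipschitz constants are lifted through $\Sigma$ to the spectral bound $\sigma_{max}(\Sigma)$, both derive the strong-monotonicity estimate with modulus $c-\sigma_{max}(\Sigma)$, both get uniqueness from a small-step discretized update via Banach's fixed-point theorem, and both close with a Lyapunov decay argument for the exponential rate. Your two deviations --- composing the discrete step with $\proj_X$ and characterizing the fixed point by the variational inequality (which handles boundary equilibria that the paper's unprojected map and interior first-order condition silently assume away), and using $\tfrac{1}{2}\|\bmx-\bmx^*\|^2$ in place of the paper's Bregman-type energy $u(\bmx^*)-u(\bmx)+\langle \bmx-\bmx^*,\nabla u(\bmx^*)\rangle$ --- are refinements of the same method rather than a different one, and if anything make the argument slightly tighter.
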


Lemma \ref{lem:near-potential} can be deduced from the results in \citet{concave_game-initial:rosen1965existence} by the utilization of \emph{concave games} and \emph{diagonal strict concavity}.
This deduction relies on the technical assumption of the second-order differentiability of $u_i(\bmx)$'s and $u(\bmx)$. 
The proof for \cref{lem:near-potential} can be done by verifying whether the conditions in \citet{concave_game-initial:rosen1965existence} are satisfied, given the conditions in this lemma. We provide a proof sketch below, with the full derivation available in arXiv ???.

\begin{proof}[Proof Sketch of \cref{lem:near-potential}]
\citet{concave_game-initial:rosen1965existence} proved that diagonal strict concavity indicates the uniqueness of NE. He also proposed a sufficient condition for diagonal strict concavity is that $G(\bmx,\bmgg) + G^T(\bmx, \bmgg)$ is negative definite. Here, $G(\bmx, \bmgg)$ is the Jacobian of $g(\bmx, \bmgg)$ \wrt\ $\bmx$, $G^T$ is the transpose of matrix $G$, and $g(\bmx, \bmgg)$ is the vector $(\gamma_i \frac{\pp u_i}{\pp x_i}(\bmx))_\iinn$, representing the pseudo-gradient of game $(u_i(\bmx))_\iinn$.

By careful computation, we can express $G(\bmx, \bmgg)$ as
\begin{align*}
    G(\bmx, \bmgg) =& H(\bmx) + \begin{bmatrix}
    \frac{\pp^2 (\gamma_1 u_1 - u)}{\pp x_1^2} (\bmx) & \cdots & \frac{\pp^2 (\gamma_1 u_1 - u)}{\pp x_1 \pp x_n}(\bmx)
    \\
    \vdots & \ddots & \vdots 
    \\
    \frac{\pp^2 (\gamma u_n - u)}{\pp x_n x_1} (\bmx) & \cdots & \frac{\pp^2 (\gamma u_n - u)}{\pp x_n \pp x_n}(\bmx)
    \end{bmatrix}
    \\
    \triangleq& H(\bmx) + I(\bmx,\bmgg) 
\end{align*}
where $H(\bmx)$ is the Hessian matrix of $u(\bmx)$ \wrt\ $\bmx$, thus is $c$-negative definite. 
By the near-potential property of the game $(u_i(\bmx))_\iinn$, we can bound the $I(\bmx, \bmgg)$ by $\Sigma$, with the largest eigenvalue of $\Sigma + \Sigma^T$ less than $2c$. Therefore, $G(\bmx, \bmgg) +  G^T(\bmx, \bmgg)$ is negative definite, which completes the proof.
\end{proof}







Next, we will present three results of the uniqueness of NE under different conditions.

\begin{theorem}
\label{thm:NE:unique:near-individual}
Given a public goods game $G = \langle \{f_i(k)\}_\iinn,$ $\{c_i(x)\}_\iinn, \{X_i\}_\iinn, W \rangle$. If the following conditions hold, \myx{W is small}
\begin{itemize}
    \item[(1)] $\gamma_i \left( f_i(x + d) - c_i(x) \right)$ is $c$-concave on $x$, for all $i$ and any fixed $d \in [\dlow_i, \dhigh_i] \coloneqq D_i$, where $\dlow_i$ and $\dhigh_i$ are the minimum and maximum externality gains of player $i$, respectively; \footnote{Similar with $\klow_i$ and $\khigh_i$, we have the explicit formula as follows: $\dlow_i = \sum_\jnei \one\{w_{ij} > 0\} w_{ij} \xlow_j + \one\{w_{ij} < 0\} w_{ij} \xhigh_j$ and $\dhigh_i = \sum_\jnei \one\{w_{ij} > 0\} w_{ij} \xhigh_j + \one\{w_{ij} < 0\} w_{ij} \xlow_j$}
    \item[(2)] $f'_i(k)$ is $L_0$-Lipschitz on $k$ for all $i$;
    \item[(3)] $c > L_0 \sigma_{max}(\Sigma)$, where $\Sigma = \{\sigma_{ij}\}_\ijinn$ and $\sigma_{ij} = \sum_\knei \gamma_k | w_{ki}w_{kj} |$,
\end{itemize}
then, the NE is unique.
\end{theorem}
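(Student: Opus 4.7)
My plan is to apply \cref{lem:near-potential} with the $\bmgg$-weighted social welfare
\[
u(\bmx) \;=\; \sum_{k=1}^{n}\gamma_k\bigl(f_k(k_k)-c_k(x_k)\bigr)
\]
as the candidate potential. Two hypotheses must be discharged: that the game is $(\bmgg, L_0\Sigma)$-near-potential with respect to $u$, with $\Sigma$ matching the one in condition~(3), and that $u$ is $c$-strongly concave on $X$. Once both are established, condition~(3) coincides with the spectral hypothesis $c > \sigma_{max}(L_0\Sigma)$ of the lemma, and we obtain uniqueness together with exponential convergence of $\bmgg$-scaled pseudo-gradient ascent to the unique NE.

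The near-potential verification is a direct calculation. Using the normalization $w_{ii}=1$, the error simplifies to
\[
\gamma_i\frac{\partial u_i}{\partial x_i}(\bmx) - \frac{\partial u}{\partial x_i}(\bmx) \;=\; -\sum_{k \ne i}\gamma_k\, f'_k(k_k)\, w_{ki},
\]
and differentiating once more in $x_j$ yields $-\sum_{k\ne i}\gamma_k f''_k(k_k) w_{ki} w_{kj}$. Since $|f''_k|\le L_0$ by condition~(2), the absolute value is at most $L_0\sum_{k\ne i}\gamma_k|w_{ki}w_{kj}| = L_0\sigma_{ij}$, exactly the matrix $\Sigma$ in condition~(3).

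For strong concavity, observe that the externality $d_i(\bmx_{-i}) = \sum_{j\ne i}w_{ij}x_j$ belongs to $D_i$ for every $\bmx \in X$, so condition~(1) with this specific $d$ yields the pointwise inequality $\gamma_i\bigl(c''_i(x_i)-f''_i(k_i)\bigr) \ge c$. Writing $-\nabla^2 u = D_c - W^\top D_f W$ with $D_c = \diag(\gamma_i c''_i(x_i))$ and $D_f = \diag(\gamma_k f''_k(k_k))$, the Hessian quadratic form reads
\[
\bmy^\top(-\nabla^2 u)\bmy \;=\; \sum_i \gamma_i c''_i(x_i) y_i^2 + \sum_k \gamma_k|f''_k(k_k)|\,(W\bmy)_k^2.
\]
Expanding $(W\bmy)_k^2 = (y_k + \sum_{j\ne k}w_{kj}y_j)^2$ and pairing each $\gamma_k|f''_k|y_k^2$ contribution with $\gamma_k c''_k(x_k) y_k^2$ via the pointwise bound produces the target $c\|\bmy\|^2$ term plus residual cross-terms of the form $2\gamma_k|f''_k|y_k\sum_{j\ne k}w_{kj}y_j$, which I plan to absorb using the non-positivity of $f''_k$ together with the network structure controlled by $\Sigma$.

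The \emph{main obstacle} is exactly this strong-concavity step. Condition~(1) is a one-dimensional statement constraining only the diagonal entries of the Jacobian of the pseudo-gradient, whereas $c$-strong concavity of $u$ is a full spectral bound on its Hessian. Translating the per-coordinate pointwise estimate into a global quadratic-form lower bound requires careful handling of the off-diagonal contributions of $W^\top D_f W$ — exploiting the normalization $w_{kk}=1$ and the non-positivity of $f''_k$ — so that the residual cross-terms produced by the $(W\bmy)_k^2$ expansion are controlled by a spectral estimate that meshes cleanly with the $L_0\sigma_{max}(\Sigma)$ threshold in condition~(3).
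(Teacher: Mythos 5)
Your strategy coincides with the paper's: take the $\bmgg$-weighted welfare $u(\bmx)=\sum_\iinn\gamma_i\left(f_i(k_i)-c_i(x_i)\right)$ as the potential, verify the near-potential property, and invoke \cref{lem:near-potential}. The near-potential half is done correctly and matches the paper's computation exactly: using $w_{ii}=1$, the error $\gamma_i\frac{\pp u_i}{\pp x_i}-\frac{\pp u}{\pp x_i}=-\sum_\knei\gamma_k f'_k(k_k)w_{ki}$ has $x_j$-derivative bounded by $L_0\sum_\knei\gamma_k|w_{ki}w_{kj}|=L_0\sigma_{ij}$, so condition~(3) is the lemma's spectral gap.

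The gap is precisely where you flag it: you never establish that $u$ is $c$-strongly concave jointly in $\bmx$; you only announce a plan to ``absorb'' the cross-terms, and that plan will not close at the stated constant. Writing $-\nabla^2u=\diag(\gamma_ic''_i(x_i))-W^T\diag(\gamma_kf''_k(k_k))W$, condition~(1) only controls the combination $\gamma_k\left(c''_k-f''_k\right)$, while the contribution $\gamma_k|f''_k(k_k)|(W\bmy)_k^2$ can vanish on directions with $W\bmy=\zeros$ (or degrade when $W$ is ill-conditioned); if moreover $c''_k\equiv0$, the quadratic form retains no $c\|\bmy\|^2$ portion. Quantitatively, expanding $W=I+O$ and dominating the residual $O^TD_f+D_fO+O^TD_fO$ entrywise by $L_0(\Sigma+\Sigma^T)$ yields only $-\nabla^2u\succeq\left(c-2L_0\sigma_{max}(\Sigma)\right)I$, so this route needs $c>2L_0\sigma_{max}(\Sigma)$ just for concavity and roughly $c>3L_0\sigma_{max}(\Sigma)$ to also satisfy the lemma's gap condition --- strictly stronger than condition~(3). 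To be fair, the paper's own proof is silent on this very hypothesis of \cref{lem:near-potential}, so you have correctly located the argument's weak point rather than invented one. The clean repair is to bypass the potential's global concavity and check Rosen's diagonal strict concavity directly on the pseudo-gradient Jacobian $G=\diag(\gamma_i)\left(\diag(f''_i(k_i))W-\diag(c''_i(x_i))\right)$: the diagonal of $G+G^T$ is $2\gamma_i(f''_i-c''_i)\le-2c$ by condition~(1), and the off-diagonal entries $\gamma_if''_iw_{ij}+\gamma_jf''_jw_{ji}$ are bounded in absolute value by $L_0\left(\gamma_i|w_{ij}|+\gamma_j|w_{ji}|\right)\le L_0(\sigma_{ji}+\sigma_{ij})$, so condition~(3) gives $G+G^T\prec0$ and uniqueness follows from \citet{concave_game-initial:rosen1965existence} at the theorem's stated threshold.
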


\begin{proof}
\label{prf:thm:NE:unique:near-individual}
We shall apply \cref{lem:near-potential} to prove this theorem.

Firstly, we construct a near-potential game by  specifying the potential function $u(\bmx)$ and the utilities $u_i(\bmx)$ for all players.

To do this, we let the potential function $u(\bmx) = \sum_\iinn \gamma_i \left( f_i(k_i) - c_i(x_i) \right)$, and specify the utilities $u_i(\bmx)$ in the near-potential game identical to the utilities in the public good game. With some straightforward calculations, we derive that
\begin{align*}
    \frac{\pp u_i}{\pp x_i}(\bmx) =& f'_i(k_i) - c'_i(x_i);
    \\
    \frac{\pp u}{\pp x_i}(\bmx) =& \sum_{i' \ne i} \gamma_{i'} f'_{i'}(k_{i'})w_{i'i} + \gamma_i \left( f'_i(k_i) - c'_i(x_i) \right).
\end{align*}

Since $f'_{i'}(k_{i'})$ is $L_0$-Lipschitz on $k_{i'}$, and $k_{i'}$ is $|w_{i'j}|$-Lipschitz on $x_j$, we have $\gamma_{i'} f'_{i'}(k_{i'})$ is $L_0 \gamma_{i'} |w_{i'j}|$-Lipschitz on $x_j$ and $\sum_{i'\ne i} \gamma_{i'} f'_{i'}(k_{i'}) w_{i'i}$ is $ L_0 \sum_{i'\ne i} \gamma_{i'} |w_{i'j} w_{i'i}|$-Lipschitz on $x_j$.

By constructing matrix $\Sigma = \{\sigma_{ij}\}_\ijinn$ with $\sigma_{ij} = L_0 \sum_\knei \gamma_k |w_{ki} w_{kj}|$, we can prove that $u_i(\bmx)$ is $\Sigma$-near-potential respect to $u(\bmx)$.
By \cref{lem:near-potential}, we obtain the result and thus complete the proof.
\end{proof}

\begin{remark}
\label{rmk:NE:unique:near-individual}
The conditions in \cref{thm:NE:unique:near-individual} intuitively means that the players are close to playing an individual-interest game, \ie, the non-diagonal elements of $W$---those describe the interactions among different players---are small enough. In fact, from the expression of potential $u(\bmx) = \sum_\iinn \gamma_i u_i(\bmx)$, we know that the NE solution is close to the (weighted) social optimal solution.
\end{remark}

\begin{restatable}{theorem}{thmNENearPotential}
\label{thm:NE:unique:near-potential}
Given a public goods game $G = \langle \{f_i(k)\}_\iinn,$ $\{c_i(x)\}_\iinn, \{X_i\}_\iinn, W \rangle$. If the following conditions hold, 
\myx{W is near potential}
\begin{itemize}
    \item[(1)] $f_i(k)$ is $(\gamma_i,\sigma_i)$-close to $f(k)$ for all $\iinn$;
    \item[(2)] $f(x + d) - \gamma_i c_i(x)$ is $c$-strongly concave on $x$ for all $\iinn$ and all $d \in [\dlow_i, \dhigh_i]$, and $f'(k)$ is $c^1$-Lipschitz on $k$, $f''(k)$ is $c^2$-Lipschitz on $k$, $c,c^1,c^2\in 
    \bbR_+$;
    \item[(3)] $c > \sigma_{max}(B)$, where $B = \{\beta_{ij}\}_\ijinn$ and $\beta_{ij} = \sigma_i |w_{ij}| + c^1 |w_{ij}-1| + c^2 \sum_\jinn |w_{ij}-1| \max\{ -\xlow_j, \xhigh_j \}$,
\end{itemize}
then the NE is unique.
\end{restatable}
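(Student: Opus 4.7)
The plan is to apply Lemma~\ref{lem:near-potential} with the potential
\[
u(\bmx) = f\Bigl(\sum_{j\in[n]} x_j\Bigr) - \sum_{i\in[n]} \gamma_i c_i(x_i),
\]
mirroring the architecture of the proof of Theorem~\ref{thm:NE:unique:near-individual} but centering the potential on the common benchmark $f$ instead of on the heterogeneous $f_i$'s. This choice is essentially forced by the shape of $\beta_{ij}$: the $c^1|w_{ij}-1|$ and $c^2\sum_l|w_{il}-1|\max\{-\xlow_l,\xhigh_l\}$ terms emerge precisely when comparing $W$ to the all-ones matrix, which is the ``target'' network encoded by the aggregate argument $\sum_j x_j$.

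First, I will verify the $(\bmgg, B)$-near-potential property. A direct computation gives
\[
\gamma_i \frac{\partial u_i}{\partial x_i}(\bmx) - \frac{\partial u}{\partial x_i}(\bmx) = \gamma_i f_i'(k_i) - f'\Bigl(\sum_l x_l\Bigr),
\]
which I will split into $[\gamma_i f_i'(k_i) - f'(k_i)] + [f'(k_i) - f'(\sum_l x_l)]$. The first bracket is $\sigma_i|w_{ij}|$-Lipschitz in $x_j$ by condition~(1) via the chain rule through $k_i = \sum_l w_{il} x_l$. For the second bracket, differentiating in $x_j$ yields $f''(k_i) w_{ij} - f''(\sum_l x_l)$, which I rewrite as $f''(k_i)(w_{ij}-1) + [f''(k_i) - f''(\sum_l x_l)]$; the bounds $|f''|\le c^1$ and the $c^2$-Lipschitzness of $f''$, combined with $|x_l|\le \max\{-\xlow_l,\xhigh_l\}$, reproduce the remaining two terms of $\beta_{ij}$. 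Hence $\Sigma = B$ and the near-potential step is complete.

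The step I expect to be the main obstacle is verifying $c$-strong concavity of $u$. Condition~(2) directly supplies only the univariate bound $\gamma_i c_i''(x_i) + |f''(x_i+d)| \ge c$, whereas the Hessian
\[
H = f''\Bigl(\sum_j x_j\Bigr)\mathbf{1}\mathbf{1}^\top - \diag(\gamma_i c_i''(x_i))
\]
must be controlled in every direction simultaneously. The rank-one term $-f''\mathbf{1}\mathbf{1}^\top$ dominates along $\mathbf{1}$ but vanishes on $\mathbf{1}^\perp$, so on that subspace the diagonal part alone must carry the bound; the plan there is a careful spectral argument that exploits the $|f''|$--$\gamma_i c_i''$ trade-off guaranteed by condition~(2) to conclude $-H \succeq c I$. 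Once this is secured, condition~(3) gives $c > \sigma_{\max}(B)$, and Lemma~\ref{lem:near-potential} delivers uniqueness of the NE.
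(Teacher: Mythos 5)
Your choice of potential $u(\bmx) = f(\sum_j x_j) - \sum_i \gamma_i c_i(x_i)$, the decomposition of $\gamma_i \frac{\pp u_i}{\pp x_i} - \frac{\pp u}{\pp x_i}$ into $[\gamma_i f_i'(k_i) - f'(k_i)] + [f'(k_i) - f'(\sum_l x_l)]$, and the Lipschitz bookkeeping that recovers $\beta_{ij}$ are exactly what the paper does (the paper argues at the level of finite differences rather than derivatives, which is cosmetic). So the near-potential half of your plan is correct and matches the paper's route.

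The step you flag as the main obstacle is, however, a genuine gap --- and your proposed resolution does not exist under the stated hypotheses. You are right that condition~(2) only controls $-f''(x+d) + \gamma_i c_i''(x) \ge c$ coordinate by coordinate, while Lemma~\ref{lem:near-potential} needs $-H \succeq cI$ jointly, with $H = f''(\sum_j x_j)\ones\ones^{\top} - \diag(\gamma_i c_i''(x_i))$. On $\ones^{\perp}$ the rank-one term contributes nothing, so joint $c$-strong concavity forces $\gamma_i c_i''$ essentially to carry the full constant $c$ by itself; condition~(2) permits the opposite regime, e.g.\ linear costs $c_i(x)=x$ with a $c$-strongly concave $f$, in which case $u$ is concave but has a zero curvature direction along any $v\perp\ones$, and no ``spectral trade-off'' can rescue the bound. (The paper itself elides this with the assertion that $c$-strong concavity in each $x_i$ implies $c$-strong concavity in $\bmx$, which is false for exactly this reason --- so your instinct to isolate this step was sound, but the sketched fix cannot be completed; one genuinely needs a stronger hypothesis, e.g.\ that each $\gamma_i c_i$ is itself $c$-strongly convex, or that $u$ is assumed jointly $c$-strongly concave.) A further minor mismatch you inherit from the statement: condition~(2) quantifies $d$ over $[\dlow_i,\dhigh_i]$, which is defined through the weights $w_{ij}$, whereas in $u$ the offset is $d=\sum_{j\ne i}x_j$, ranging over $[\sum_{j\ne i}\xlow_j,\sum_{j\ne i}\xhigh_j]$; these intervals need not coincide unless $W$ is close to the all-ones matrix, so the coordinate-wise bound itself requires a word of justification.
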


\begin{remark}
\label{rmk:NE:unique:near-potential}
It's important to note that the conditions specified in \cref{thm:NE:unique:near-potential} intuitively suggest that each element of $W$ closely approximates $1$, and the values derived from the gains ${f_i(k)}_\iinn$ are nearly identical (when scaled). Consequently, the game approaches the characteristics of an identical-interest game, where the players' actions nearly maximize the potential function $u(\bmx) = f(\| \bmx \|_1) - \sum_\iinn \gamma_i c_i(x_i)$.
However, the social welfare is close to $n f(\| \bmx \|_1) - \sum_\iinn \gamma_i c_i(x_i)$, the $\frac{1}{n}$ coefficients on values means that in this case, the free-ride phenomenon can occur.
\end{remark}

\begin{restatable}{theorem}{thmNENearSymmetric}
\label{thm:NE:unique:near-symmetric}
Given a public goods game $G = \langle \{f_i(k)\}_\iinn,$ $\{c_i(x)\}_\iinn, \{X_i\}_\iinn, W \rangle$. If the following conditions hold, 
\myx{W is near positive definite}
\begin{itemize}
    \item[(1)] $W^0$ is positive definite and $\sigma_{min}(W^0) = \sigma_0 > 0$. We also restrict $w^0_{ii} = 1,\ \forall \iinn$ where $W^0 = \{w^0_{ij}\}_\ijinn$;
    \item[(2)] $c'_i(x)$ is $L_i$-Lipschitz on $x$ for all $i$;
    \item[(3)] $f_i(k)$ is $C_i$-concave on $k$ for all $i$;
    \item[(4)] $\sigma_0 > \sigma_{max}(\Sigma)$, where $\Sigma = \{\sigma_{ij}\}_\ijinn$ and $\sigma_{ii} = 0$ and $\sigma_{ij} = \frac{2L_i |w_{ij}|}{C_i} + |w^0_{ij} - w_{ij}|$,
\end{itemize}
where $\sigma_{min}(W)$ represents the minimal eigenvalue of a symmetric matrix $W$, then the NE is unique.
\end{restatable}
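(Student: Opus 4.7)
The plan is to apply \cref{lem:near-potential}, mirroring the template used for \cref{thm:NE:unique:near-individual,thm:NE:unique:near-potential}. Since no scaling vector $\bmgg$ appears in the statement, I take $\gamma_i = 1$ for every $\iinn$. The task then reduces to constructing a single potential $u(\bmx)$ such that (a) $u$ is $c$-strongly concave in $\bmx$ with $c \ge \sigma_0$, and (b) for every pair $(i,j)$ the residual $\frac{\pp u_i}{\pp x_i}(\bmx) - \frac{\pp u}{\pp x_i}(\bmx)$ is $\sigma_{ij}$-Lipschitz in $x_j$, matching the entries of $\Sigma$ in condition~(4) and, in particular, the exact vanishing $\sigma_{ii} = 0$.

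The requirement $\sigma_{ii} = 0$ is the structurally most delicate constraint: it forces the $x_i$-dependence of $\frac{\pp u}{\pp x_i}$ and $\frac{\pp u_i}{\pp x_i}$ to coincide exactly. I would look for $u$ of the form $u(\bmx) = \sum_i \left[ f_i(k_i) - c_i(x_i) \right] - R(\bmx)$, where $R$ is a correction designed to absorb the cross-derivative terms $\sum_{j \ne i} f'_j(k_j) w_{ji}$ that appear when one naively uses the sum of utilities. A natural ansatz combines a quadratic form built from $W^0$ with ``off-diagonal'' copies of the value functions $f_j$, written in an $x_i$-free argument such as $\tilde k_j^{-j} = \sum_{l \ne j} w^0_{jl} x_l$, so that the problematic $x_i$-contributions cancel exactly while the symmetric positive-definite structure of $W^0$ (with $\sigma_{\min}(W^0) = \sigma_0$) supplies the required $\sigma_0$-strong concavity of $u$.

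The Lipschitz verification then splits into two contributions in the $x_j$-direction for $j \ne i$. The first is the gap between the matrix $W$ (used inside $f_i(k_i)$) and its symmetric approximation $W^0$ (used inside $R$); via the identity $k_j - \tilde k_j = \sum_l (w_{jl} - w^0_{jl}) x_l$ it yields the $|w^0_{ij} - w_{ij}|$ piece of $\sigma_{ij}$. The second piece $\frac{2 L_i |w_{ij}|}{C_i}$ arises from the residual nonlinearity of $f'_i$: the $L_i$-Lipschitz bound on $c'_i$ combines with the strong-concavity inversion $|f''_i| \ge C_i$, through a best-response-type argument in the spirit of an implicit function estimate, with the factor $2$ arising from a symmetrization between the directions $(i,j)$ and $(j,i)$. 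Assembling these bounds into $\Sigma$ and applying condition~(4) $\sigma_0 > \sigma_{\max}(\Sigma)$ delivers uniqueness of the NE and exponential convergence of the pseudo-gradient ascent dynamic by \cref{lem:near-potential}.

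The main obstacle is unquestionably the design of $R$ so that $\sigma_{ii} = 0$ holds \emph{exactly} while the remaining $x_j$-Lipschitz constants simultaneously match the prescribed $\sigma_{ij}$. This is an integrability-type constraint on the potential, and for heterogeneous $f_i, c_i$ it does not admit an obvious closed form; the construction must be tuned to exploit both the positive-definite symmetry of $W^0$ and the ratio between $L_i$ and $C_i$. Once $R$ is fixed, the remaining bookkeeping of Lipschitz constants, the verification of strong concavity via $\sigma_{\min}(W^0) = \sigma_0$, and the final invocation of \cref{lem:near-potential} are routine.
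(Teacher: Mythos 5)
You have correctly identified the target lemma (\cref{lem:near-potential}) and even the right decomposition of $\sigma_{ij}$ into a $|w^0_{ij}-w_{ij}|$ piece and a $\frac{2L_i|w_{ij}|}{C_i}$ piece, but the central construction is missing, and the route you propose is blocked. Applying \cref{lem:near-potential} directly to the original utilities $u_i(\bmx)=f_i(k_i)-c_i(x_i)$ with a corrected potential $u=\sum_i u_i - R$ requires $\frac{\pp u}{\pp x_i}$ to reproduce the $x_i$-dependence of $f_i'(k_i)-c_i'(x_i)$ exactly (to obtain $\sigma_{ii}=0$), which forces the cross-partial symmetry $\frac{\pp^2 u}{\pp x_j\pp x_i}=\frac{\pp^2 u}{\pp x_i\pp x_j}$ and hence an integrability constraint on the heterogeneous $f_i$ and the (possibly asymmetric) $W$ that has no solution in general. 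You flag this obstacle yourself but leave it unresolved, so the argument does not go through as written.

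The paper's proof avoids the obstruction by changing the game rather than the potential. It first fixes an NE $\bmx^*$ (which exists by \cref{thm:NE:exist}) and sets $\bmk^*=W\bmx^*$; it then defines $y_i(\bmx_{-i})$ as player $i$'s optimal gain level given $\bmx_{-i}$ and shows, via a Lipschitz-of-argmax estimate using the $L_i$-Lipschitzness of $c_i'$ and the $C_i$-concavity of $f_i$, that $y_i$ is $\frac{2L_i|w_{ij}|}{C_i}$-Lipschitz in $x_j$ --- this is where that piece of $\sigma_{ij}$ actually comes from. It then introduces the auxiliary quadratic utilities $\phi_i(\bmx)=y_i(\bmx_{-i})x_i-\frac{x_i^2}{2}-\sum_\jnei w_{ij}x_ix_j$ and the quadratic potential $u(\bmx)=\bmk^{*T}\bmx-\frac{1}{2}\bmx^T W^0\bmx$, which is $\sigma_0$-strongly concave by condition (1). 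Because both $\frac{\pp \phi_i}{\pp x_i}$ and $\frac{\pp u}{\pp x_i}$ have slope $-1$ in $x_i$ (as $w_{ii}=w^0_{ii}=1$), their difference is literally constant in $x_i$, giving $\sigma_{ii}=0$ for free, and its $x_j$-Lipschitz constant is exactly $\frac{2L_i|w_{ij}|}{C_i}+|w^0_{ij}-w_{ij}|$. Finally, the first-order condition $\frac{\pp\phi_i}{\pp x_i}=0$ reads $y_i(\bmx_{-i})=x_i+\sum_\jnei w_{ij}x_j$, which identifies NEs of the auxiliary game with NEs of $G$, so uniqueness transfers back. The NE-preserving reparametrization through $y_i(\bmx_{-i})$ is the ingredient you need to add; once it is in place, the rest of your bookkeeping matches the paper's.
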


\begin{proof}[Proof Sketch of \cref{thm:NE:unique:near-symmetric}]
\citet{public-network-direct-BRD:bayer2023best} proved that, when $W$ is symmetric and the cost functions $c_i(x)$s are linear, then the best-response dynamic converges. The insight is that when $c_i(x)$s are linear, each player $i$ has its own marginal cost $c_i$, and the ideal $k_i$ such that $f'_i(k_i) = c_i$. Therefore, each player $i$ plays the best response to her ideal gain $k_i$, and $\phi(\bmx) = \bmk^T \bmx - \frac{1}{2} \bmx^T W \bmx$ becomes a potential function. Moreover, the NE must be unique if $W$ is positive semi-definite.

Our proof follows this insight. We construct the potential function $\phi(\bmx) = \bmk^{*T}\bmx - \frac{1}{2}\bmx^T W^0 \bmx$. Similarly define $y_i(\bmx_{-i})$ as the optimal gain level of player $i$, when the strategy profile of other players is $\bmx_{-i}$. Then the utilities in the near-potential game are, 
\begin{align*}
    \phi_i(\bmx) = y_i(\bmx_{-i}) x_i - \frac{x_i^2}{2} - \sum_\jnei w_{ij} x_i x_j.
\end{align*}
We then prove that: (1) the NE of the near-potential game corresponds to the NE of the original public good game; and (2) the constructed game $\{\phi_i(\bmx)\}_\iinn$ is indeed a near-potential game.
Following these results, the proof can be completed by 
\cref{lem:near-potential}.
\end{proof}

\begin{remark}
Theorem \ref{thm:NE:unique:near-symmetric} intuitively suggests that, if $W$ is close to a positive definite matrix $W_0$, as well as that the profit functions $f_i(k)$s are more concave than cost functions $c_i(x)$s, then the NE is unique.
\end{remark}

In addition to these three theorems that establish the uniqueness of the NE, we introduce a concept, called \emph{game equivalence}, which can expand the applicability of these theorems.

\begin{definition}[Game Equivalence]
\label{def:equivalence}
Given two public goods games $G^1, G^2$ with $n$ players, where 
\begin{align*}
    G^j = (\{f^j_i\}_\iinn, \{c^j_i\}_\iinn, \{X^j_i = [\xlow^j_i, \xhigh^j_i]\}_\iinn, W^j), j \in \{1,2\}.
\end{align*}
$G^1$ is equivalent to $G^2$, if there is a diagonal matrix $D = \diag(d_1,...,d_n)$, $d_i\in \bbR_{++}$ and an offset vector $\bm{b} \in \bbR^n$, satisfying that, 
\begin{align*}
    W^2 =& D W^1 D^{-1};
    \\
    \xlow^2_i =& d_i \xlow^1_i + b_i;
    \\
    \xhigh^2_i =& d_i \xhigh^1_i + b_i;
    \\
    c^1_i(x) =& c^2_i(d_i x + b_i)\quad\forall x \in X^1_i;
    \\
    f^1_i(k) =& f^2_i(d_i k + m_i)\quad\forall k \in K^1_i,
\end{align*}
where $m_1,...,m_n$ are constants such that $m_i = d_i \sum_\jinn \frac{w^1_{ij} b_j}{d_j}$.

\end{definition}

Intuitively, Definition \ref{def:equivalence} states that if $G^1$ is equivalent to $G^2$, then $G^1$ and $G^2$ are intrinsically the same in terms of linear transformation. Through this insight, we have the following theorem.

\begin{restatable}{theorem}{thmNEEquivalence}
\label{thm:NE:unique:equivalence}
If two games, $G_1$ and $G_2$, are equivalent, then there exists a one-to-one mapping between NEs of $G_1$ and the NEs of $G_2$.
\end{restatable}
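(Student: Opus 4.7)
The plan is to exhibit a natural affine bijection $\phi: X^1 \to X^2$ between the strategy spaces and to show that utilities are invariant along it, from which the one-to-one NE correspondence follows immediately.

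First, I would define $\phi(\bmx) = D\bmx + \bmb$, i.e., $y_i = d_i x_i + b_i$. Since each $d_i > 0$, this is an affine bijection from $X^1_i = [\xlow^1_i, \xhigh^1_i]$ onto $[d_i \xlow^1_i + b_i, d_i \xhigh^1_i + b_i]$, which by \cref{def:equivalence} equals $X^2_i$. The componentwise inverse is $\phi^{-1}(\bmy) = D^{-1}(\bmy - \bmb)$, so $\phi$ is a bijection from $X^1$ onto $X^2$, and a unilateral deviation $x'_i$ in $G^1$ corresponds to the unilateral deviation $y'_i = d_i x'_i + b_i$ in $G^2$ with $\bmy_{-i}$ fixed.

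The crucial algebraic step is relating the gain profiles under $\phi$. From $W^2 = D W^1 D^{-1}$ I have $w^2_{ij} = d_i w^1_{ij}/d_j$, and hence
\begin{align*}
k^2_i = \sum_j w^2_{ij} y_j = \sum_j \frac{d_i w^1_{ij}}{d_j}(d_j x_j + b_j) = d_i k^1_i + d_i \sum_j \frac{w^1_{ij} b_j}{d_j} = d_i k^1_i + m_i,
\end{align*}
using exactly the formula $m_i = d_i \sum_j w^1_{ij} b_j / d_j$ from \cref{def:equivalence}. This explains why $m_i$ is defined that way: it is precisely the shift required for the affine change of variables to be compatible with the network-weighted gain. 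Applying the equivalence relations $f^1_i(k) = f^2_i(d_i k + m_i)$ and $c^1_i(x) = c^2_i(d_i x + b_i)$, I then obtain
\begin{align*}
u^1_i(\bmx) = f^1_i(k^1_i) - c^1_i(x_i) = f^2_i(d_i k^1_i + m_i) - c^2_i(d_i x_i + b_i) = f^2_i(k^2_i) - c^2_i(y_i) = u^2_i(\phi(\bmx)).
\end{align*}

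Combining these, for any player $i$ and any deviation $x'_i \in X^1_i$, the identity $u^1_i(x'_i, \bmx_{-i}) = u^2_i(d_i x'_i + b_i, \phi(\bmx)_{-i})$ holds, and $x'_i \mapsto d_i x'_i + b_i$ is a bijection between $X^1_i$ and $X^2_i$. Therefore the NE inequalities \eqref{eq:static:NE} are satisfied by $\bmx^*$ in $G^1$ if and only if they are satisfied by $\phi(\bmx^*)$ in $G^2$, so $\phi$ restricts to the desired bijection between the NE sets.

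There is no real obstacle here: \cref{def:equivalence} is engineered precisely so that the affine substitution is exact. The only step requiring care is the gain computation above, where the otherwise opaque formula for $m_i$ emerges as the unique shift making the substitution compatible with the linear action of $W^2$; once this is in place the rest of the argument is essentially automatic.
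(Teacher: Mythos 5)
Your proposal is correct and follows essentially the same route as the paper: the affine substitution $x^2_i = d_i x^1_i + b_i$, the computation $k^2_i = d_i k^1_i + m_i$, and the resulting utility identity are exactly the paper's argument. The only cosmetic difference is that you establish the deviation-by-deviation utility equality as an ``if and only if,'' obtaining both directions at once, whereas the paper proves one direction and then derives the converse by observing that the equivalence relation is symmetric (via $d'_i = 1/d_i$, $b'_i = -b_i/d_i$).
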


From \cref{thm:NE:unique:equivalence},it is evident that the uniqueness property of the NE is preserved within the equivalent class. Therefore, we present the following corollary, which further broadens the class of public goods games that have a unique NE.

\begin{corollary}
\label{cor:NE:unique:equivalence}
For a public goods game $G^1$, if $G^1$ is equivalent to the  game $G^2$, and $G^2$ satisfies the conditions in \cref{thm:NE:unique:near-individual}, \cref{thm:NE:unique:near-potential} or \cref{thm:NE:unique:near-symmetric}, then $G^1$ has a unique NE.
\end{corollary}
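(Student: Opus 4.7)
The plan is to exhibit an explicit bijection between $X^1$ and $X^2$ induced by the affine transformation $\bmx^2 = D\bmx^1 + \bm{b}$, show that the utilities of the two games coincide pointwise under this map, and then read off that a profile is a NE of $G^1$ if and only if its image is a NE of $G^2$.

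First, I would verify that the map $\Phi: X^1 \to X^2$ defined by $\Phi(\bmx^1)_i = d_i x^1_i + b_i$ is a well-defined bijection. Since each $d_i > 0$, the coordinate-wise map $x^1_i \mapsto d_i x^1_i + b_i$ is a strictly increasing affine bijection $[\xlow^1_i, \xhigh^1_i] \to [d_i \xlow^1_i + b_i, d_i \xhigh^1_i + b_i] = [\xlow^2_i, \xhigh^2_i]$, with inverse $x^2_i \mapsto (x^2_i - b_i)/d_i$. This immediately gives the required bijection of strategy sets, and the same bijection applied to $\bmx_{-i}$ yields a bijection of $X^1_{-i}$ onto $X^2_{-i}$.

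Next comes the key algebraic identity, which is the main (though routine) calculation: under $\bmx^2 = \Phi(\bmx^1)$, the gain vectors satisfy $k^2_i = d_i k^1_i + m_i$. Indeed, using $w^2_{ij} = d_i w^1_{ij}/d_j$ (from $W^2 = D W^1 D^{-1}$),
\begin{align*}
    k^2_i = \sum_\jinn w^2_{ij} x^2_j
    = \sum_\jinn \frac{d_i w^1_{ij}}{d_j}(d_j x^1_j + b_j)
    = d_i k^1_i + d_i \sum_\jinn \frac{w^1_{ij} b_j}{d_j}
    = d_i k^1_i + m_i,
\end{align*}
where the last equality uses the defining condition on $m_i$ in \cref{def:equivalence}. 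Combining this with $c^1_i(x^1_i) = c^2_i(d_i x^1_i + b_i) = c^2_i(x^2_i)$ and $f^1_i(k^1_i) = f^2_i(d_i k^1_i + m_i) = f^2_i(k^2_i)$ gives
\begin{align*}
    u^1_i(\bmx^1) = f^1_i(k^1_i) - c^1_i(x^1_i) = f^2_i(k^2_i) - c^2_i(x^2_i) = u^2_i(\Phi(\bmx^1)).
\end{align*}

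Finally, to transfer the NE property, I would apply the utility identity with a unilateral deviation: for any $y^1_i \in X^1_i$ let $y^2_i = d_i y^1_i + b_i \in X^2_i$; then $u^1_i(y^1_i, \bmx^1_{-i}) = u^2_i(y^2_i, \bmx^2_{-i})$ and $u^1_i(\bmx^1) = u^2_i(\bmx^2)$. Hence the defining inequality \eqref{eq:static:NE} for $G^1$ at $\bmx^1$ is equivalent, term by term and over the bijected deviation set, to the corresponding inequality for $G^2$ at $\bmx^2 = \Phi(\bmx^1)$. Therefore $\Phi$ restricts to a bijection between the NE sets of $G^1$ and $G^2$. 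I anticipate no real obstacle here; the only step requiring care is the $k^2_i = d_i k^1_i + m_i$ computation, which is precisely the reason the offset vector $m_i$ is defined the way it is in \cref{def:equivalence}.
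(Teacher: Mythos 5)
Your proposal is correct and follows essentially the same route as the paper: the paper proves \cref{thm:NE:unique:equivalence} via the same affine map $x^2_i = d_i x^1_i + b_i$, the same computation $k^2_i = d_i k^1_i + m_i$, and the same pointwise utility identity, and the corollary then follows immediately from the resulting bijection of NE sets together with the uniqueness guaranteed for $G^2$. The only cosmetic difference is that you obtain both directions at once by matching the NE inequalities term by term over the bijected deviation sets, whereas the paper proves one direction and then invokes the symmetry of the equivalence relation (via $d_i' = 1/d_i$, $b_i' = -b_i/d_i$) for the converse.
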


\section{Case Study}
\label{sec:case}

\subsection{Comparative Statics: Money Redistribution for Welfare Analysis}
\label{subsec:comparative_money}


In this section, we study comparative statics, \ie, how the players' utilities will change if the model parameters are modified by an infinitesimal amount. 
We characterize the infinitesimal modification by money redistribution, \ie, replace $\{f_i(k_i)\}_\iinn$ by $\{f_i(k_i + \delta_i t)\}_\iinn$, where $\bmdd = (\delta_1,\dots, \delta_n)\in\bbR^n$ is called the direction of money redistribution and $t\in \bbR$ is called the change magnitude. 
Overall, there is a $\bmdd t$ shift in the gain level of players.
The goal of infinitesimal change drives us to study the case $t\to 0$.

In this way, the utility of player $i$ becomes
\begin{eqnarray*}
  u_i(\bmx;t) = f_i(k_i + \delta_i t) - c_i(x_i).   
\end{eqnarray*}

Denote $\bmx^*(t)$ as the NE when the change magnitude is $t$. We do not assume the uniqueness of NE anymore, and $\bmx^*(t)$ might be not unique. However, we assume the first-order differentiability of $\bmx^*(t)$ with respect to $t$, as well as that $\bmx^*(t)$ is an inner point of $X$. 
These assumptions are quite natural. 
For the first assumption, if the game changes with an infinitesimal magnitude and players always achieve the rational outcome, \ie, NE, then it is imaginable and intuitive that the outcome of players should also change minimally.
The second assumption is only technical.
We denote $u_i(t) = u_i(\bmx^*(t);t)$ for a little abuse of notation when the context is clear. We are mainly concerned about $u'_i(0)$, which means that what the marginal change of $\delta$ would affect the players' utilities. Thus, we have the following result.


\begin{restatable}{theorem}{thmComparativeMoney}
\label{thm:comparative:money}
Assume $u_i(t)$ and $\bmx^*(t)$ are defined above, and denote $\bmx^* = \bmx^*(0)$, $\bmk^* = W \bmx^*$, then,
\begin{align*}
    \bmu'(0) = \diag(\bmf'(\bmk^*)) \cdot \diag(\bmc''(\bmx^*) - \bmf''(\bmk^*)) 
    \\
    \left( \diag(\bmc''(\bmx^*)) - W \diag(\bmf''(\bmk^*))\right)^{-1} \bmdd
\end{align*}
where $\bmu(0)$ represents the utility profile $(u_1(0), u_2(0), \dots, u_n(0))$.
\end{restatable}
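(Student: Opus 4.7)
The plan is to invoke the interior first-order conditions (FOC) characterising $\bmx^*(t)$, differentiate them implicitly in $t$ to obtain the sensitivity $\bmx^{*\prime}(0)$, and then combine this with direct differentiation of $u_i(t)$ along the equilibrium path, using the FOC itself to collapse an envelope-type term. Since $\bmx^*(t)$ is assumed to be interior and $w_{ii}=1$, the player-$i$ FOC at the NE reads
\[
    f'_i\Bigl(\sum_j w_{ij} x^*_j(t) + \delta_i t\Bigr) = c'_i(x^*_i(t)).
\]
Under the strict convexity/concavity assumptions, the Jacobian with respect to $\bmx$ at $t=0$, namely $\diag(\bmf''(\bmk^*))\,W - \diag(\bmc''(\bmx^*))$, is nonsingular, which is consistent with the hypothesised $C^1$ dependence of $\bmx^*(t)$ and legitimises implicit differentiation.

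First I would differentiate the FOC at $t=0$. Writing $\bmy = \bmx^{*\prime}(0)$, the differentiated FOC is the linear system $\diag(\bmf''(\bmk^*))(W\bmy + \bmdd) = \diag(\bmc''(\bmx^*))\,\bmy$, which solves to
\[
    \bmy = \bigl(\diag(\bmc''(\bmx^*)) - \diag(\bmf''(\bmk^*))\,W\bigr)^{-1}\diag(\bmf''(\bmk^*))\,\bmdd.
\]
For the utility I would differentiate $u_i(t) = f_i(k^*_i(t) + \delta_i t) - c_i(x^*_i(t))$ directly and apply the FOC $f'_i(k^*_i) = c'_i(x^*_i)$ to factor out $f'_i(k^*_i)$, giving $u'_i(0) = f'_i(k^*_i)\,\bigl([W\bmy]_i + \delta_i - y_i\bigr)$. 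Substituting the $i$-th row of the differentiated FOC, rewritten as $[W\bmy]_i + \delta_i = c''_i(x^*_i)\,y_i/f''_i(k^*_i)$, yields the coordinate expression
\[
    u'_i(0) = f'_i(k^*_i)\,\frac{c''_i(x^*_i) - f''_i(k^*_i)}{f''_i(k^*_i)}\,y_i,
\]
so the FOC is used twice: first as an envelope-style cancellation, then via its derivative to eliminate $[W\bmy]_i + \delta_i$.

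The final step is matrix assembly, which I expect to be the main obstacle. Packaging the coordinate formula above into matrix form sandwiches the inverse $\bigl(\diag(\bmc''(\bmx^*)) - \diag(\bmf''(\bmk^*))\,W\bigr)^{-1}$ between factors $\diag(\bmf''(\bmk^*))^{\pm 1}$, and the subtle point is that $W$ does not commute with $\diag(\bmf''(\bmk^*))$. I would handle this via the similarity identity $Q^{-1}M^{-1}Q = (Q^{-1}MQ)^{-1}$ applied with $Q = \diag(\bmf''(\bmk^*))$, using the commutativity of the two diagonal matrices $\diag(\bmf''(\bmk^*))$ and $\diag(\bmc''(\bmx^*))$ to reduce $Q^{-1}\bigl(\diag(\bmc''(\bmx^*)) - \diag(\bmf''(\bmk^*))\,W\bigr)Q$ to $\diag(\bmc''(\bmx^*)) - W\diag(\bmf''(\bmk^*))$. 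After this conjugation the surviving diagonal factors collapse to $\diag(\bmf'(\bmk^*))\,\diag(\bmc''(\bmx^*) - \bmf''(\bmk^*))$, and the claimed formula drops out; the remaining bookkeeping with diagonal matrices is routine.
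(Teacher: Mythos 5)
Your proposal is correct and follows essentially the same route as the paper: implicit differentiation of the interior first-order conditions to get $\bmx^{*\prime}(0) = \bigl(\diag(\bmc''(\bmx^*)) - \diag(\bmf''(\bmk^*))W\bigr)^{-1}\diag(\bmf''(\bmk^*))\bmdd$, the chain rule on $u_i(t)$ with the FOC $f'_i(k^*_i)=c'_i(x^*_i)$ collapsing the envelope term, and a final conjugation by $\diag(\bmf''(\bmk^*))$ to move $W$ past the diagonal factor. The only (cosmetic) difference is in the last step, where you substitute the differentiated FOC coordinate-wise to reach $u'_i(0)=f'_i(k^*_i)\tfrac{c''_i(x^*_i)-f''_i(k^*_i)}{f''_i(k^*_i)}y_i$ directly, while the paper reaches the same expression by an add-and-subtract manipulation of $(W-I)\bigl(\diag(\bmc''/\bmf'')-W\bigr)^{-1}$.
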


We also demonstrate some examples to illustrate the implications of the result in \cref{thm:comparative:money}.
\begin{example}
\label{eg:comparative:money:simple}
Here are some simple cases of \cref{thm:comparative:money}.
\begin{enumerate}
\item If the value function is linear on gain, \ie, $f''_i(k) \equiv 0$, then, it becomes that
\begin{align*}
    \bmu'(0) = \diag(\bmf'(\bmk^*)) \bmdd.
\end{align*}
This result is intuitive, because a linear value function indicates that the NE is unique and is constant since the marginal values for players' efforts are constants and marginal costs only depend on players' strategies. Therefore, the change of money redistribution has a direct change on the utilities.

\item If the cost function is linear on effort, \ie, $c''_i(x) \equiv 0$, then, it becomes that
\begin{align*}
    \bmu'(0) = \diag(\bmf'(\bmk^*)) W^{-1} \bmdd.
\end{align*}

In this case, NE might be not constant and not unique (see \cref{eg:NE:nonunique}). Therefore, the redistribution of money will affect the interactions of players, and thus have an indirect effect on the utilities. Specifically, the indirect effect imposes the inverse of $W$---the matrix that portrays the interactions of players---to the money redistribution $\bmdd$.

\item If we want the money redistribution to be Pareto dominant, \ie, $u'_i(0) \ge 0$ for all players, since the first two diagonal matrices are positive diagonal matrices, the only requirement of $\bmdd$ is:
\begin{align*}
    \left[ \diag(\bmc''(\bmx^*)) - W \diag(\bmf''(\bmk^*)) \right]^{-1} \bmdd \ge \zeros.
\end{align*}

Besides, a linear cost would reduce the requirements to,
\begin{align*}
    W^{-1}\bmdd \ge \zeros.
\end{align*}

\end{enumerate}

\end{example}





\subsection{Some Applications of Results}
\label{subsec:eg}

In this section, we propose a specific example to illustrate how the results regarding the uniqueness of NE can be applied in practice. This example is inspired by \citet{public-dynamic:fershtman1991dynamic}, where the cost functions are modeled as quadratic functions.

Specifically, we assume the homogeneity of players in the public good game $G$, i.e., the values of gains, costs of efforts, and action spaces are identical among players, with differences only in the network structure $W$. Therefore, we use $f(k)$ and $c(x)$ instead of $f_i(k_i)$ and $c_i(x_i)$ to represents values and costs, when the context allows. 

Assume $f(k)$ and $c(x)$ has following expression:
\begin{align*}
    f(k) =& 
    \begin{cases}
    ak - b k^2 \quad &\text{if $0\le k \le \frac{a}{2b}$}
    \\
    \frac{a^2}{4b} \quad &\text{if $k > \frac{a}{2b}$}
    \end{cases}
    \\
    c(x) =& \frac{c_0}{2}x^2\quad\qquad\quad \text{for $c_0>0$}
\end{align*}
and $X = [0,\xhigh]$ for a sufficiently large 
$\xhigh$ such that choosing $\xhigh$ is a dominated strategy for all players, due to extremely high costs and bounded values for gains. 
The values and costs are quadratic functions in their domains, with a clipping on the value function at the maximum point.
We also restrict $w_{ij}$ to be either $0$ or $1$.


From the expressions of $f(k)$ and $c(x)$, we know that $c(x)$ is $c_0$-strongly convex, $c'(x)$ is $c_0$-Lipschitz, $f(k)$ is $2b$-strongly concave in the domain $[0,\frac{a}{2b}]$ , and $f'(k)$ is $2b$-Lipschitz on the full domain.

\subsubsection{The Application of \cref{thm:NE:unique:near-individual}}
In this part, we assume that the non-diagonal elements of $W$ are $\iid$ generated with probability $p = \frac{p_0}{n}$ equals to $1$ and $0$ otherwise, where $p_0>0$ is a constant.
We have the following theorem,
\begin{restatable}{theorem}{thmCaseOne}
\label{thm:case:1}
\label{thm:eg:independent-w}
    if $\frac{c_0}{2b} > 2p_0 + p_0^2 + \sqrt{n(8p_0 + 10p_0^2 + 4 p_0^3)}$, then with probability at least $\frac{1}{2}$, the public good game $G$ has a unique NE.
\end{restatable}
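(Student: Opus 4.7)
The plan is to apply \cref{thm:NE:unique:near-individual} with uniform scaling $\gamma_i = 1$ for every $i$. The deterministic conditions are immediate from the quadratic specification: since $c(x) = \tfrac{c_0}{2} x^2$ gives $c'' \equiv c_0$ and $f''(k) \in \{-2b, 0\}$, the function $f(x+d) - c(x)$ has second derivative at most $-c_0$ almost everywhere and is therefore $c_0$-strongly concave in $x$; and $f'$ is $2b$-Lipschitz, so $L_0 = 2b$. Because $w_{ki} \in \{0,1\}$ removes the absolute values in condition (3), the remaining requirement reduces to $\sigma_{\max}(\Sigma) < c_0/(2b)$ with $\Sigma_{ij} = \sum_{k \neq i} w_{ki} w_{kj}$.

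Next, I decompose $\sigma_{\max}(\Sigma) \le \sigma_{\max}(\bbE[\Sigma]) + \sigma_{\max}(\Sigma - \bbE[\Sigma])$ and handle the two terms separately. By the iid assumption, $\bbE[\Sigma]$ has constant diagonal $(n-1) p_0/n$ and constant off-diagonal $p_0/n + (n-2) p_0^2/n^2$, so $\bbE[\Sigma] = \alpha I + \beta J$ for explicit scalars $\alpha, \beta$; its top eigenvalue, attained on the all-ones direction, evaluates to $2(n-1) p_0/n + (n-1)(n-2) p_0^2/n^2 \le 2 p_0 + p_0^2$. This recovers the deterministic first summand in the hypothesized bound.

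For the fluctuation, I use the Frobenius bound $\sigma_{\max}(\Sigma - \bbE[\Sigma]) \le \| \Sigma - \bbE[\Sigma] \|_F$ and compute $\bbE[\| \Sigma - \bbE[\Sigma] \|_F^2] = \sum_{i,j} \Var(\sigma_{ij})$ using independence. The diagonal term is a Binomial variance $(n-1)(p_0/n)(1 - p_0/n)$. For $i \neq j$, the decomposition $\sigma_{ij} = w_{ji} + \sum_{k \neq i, j} w_{ki} w_{kj}$ is a sum of two independent quantities (the variables $w_{ji}$ is excluded from the range of summation in the second piece), so variances add; after summing over all pairs and bounding factors $(n-k)/n \le 1$ at the right places, the total is at most $\tfrac{1}{2} n (8 p_0 + 10 p_0^2 + 4 p_0^3)$. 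Markov's inequality applied to $\| \Sigma - \bbE[\Sigma] \|_F^2$ then yields $\| \Sigma - \bbE[\Sigma] \|_F \le \sqrt{n(8 p_0 + 10 p_0^2 + 4 p_0^3)}$ with probability at least $1/2$. Combining the two bounds gives $\sigma_{\max}(\Sigma) < c_0/(2b)$ on this event by the standing hypothesis, whereupon \cref{thm:NE:unique:near-individual} produces a unique NE.

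The main obstacle is the variance computation: recovering the exact coefficients $8, 10, 4$ requires expanding each $\Var(\sigma_{ij})$ carefully (retaining separately the $w_{ji}$ contribution and the degree-two $\sum_{k \neq i, j} w_{ki} w_{kj}$ contribution), summing over the $n(n-1)$ off-diagonal pairs together with the $n$ diagonal entries, and then discarding $(n - k)/n$ factors at precisely the right moments so that the $p_0^3$ term survives while the $p_0$ and $p_0^2$ coefficients match the hypothesis exactly.
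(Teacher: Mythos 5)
Your proposal is correct, and the deterministic part (instantiating \cref{thm:NE:unique:near-individual} with $\gamma_i\equiv 1$, $c=c_0$, $L_0=2b$, and $\sigma_{ij}=\sum_{k\ne i}w_{ki}w_{kj}$) coincides with the paper's. The probabilistic part, however, takes a genuinely different route. The paper bounds $\sigma_{\max}(\Sigma)\le\|\Sigma\|_\infty=\max_i\delta_i$ with $\delta_i$ the $i$-th row sum, computes $\bbE[\delta_i]\le 2p_0+p_0^2$ and $\Var[\delta_i]\le 4p_0+5p_0^2+2p_0^3$ (which forces a fourth-moment expansion of a squared degree-two polynomial in the $w$'s), and finishes with Chebyshev on each row plus a union bound over the $n$ rows, choosing $k=\sqrt{n(8p_0+10p_0^2+4p_0^3)}$ to make the failure probability $1/2$. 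You instead split $\sigma_{\max}(\Sigma)\le\sigma_{\max}(\bbE[\Sigma])+\|\Sigma-\bbE[\Sigma]\|_F$, read off the top eigenvalue of the symmetric matrix $\bbE[\Sigma]=\beta J+(\alpha-\beta)I$ exactly (recovering the same $2p_0+p_0^2$), and control the fluctuation by $\sum_{i,j}\Var(\sigma_{ij})$ plus Markov. Your computation is substantially lighter: each $\sigma_{ij}$ is a sum of mutually independent Bernoulli products, so only entrywise variances are needed, and the total is in fact at most $n(2p_0+p_0^2)$ --- comfortably below your target $\tfrac12 n(8p_0+10p_0^2+4p_0^3)$. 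So the ``main obstacle'' you flag (matching the coefficients $8,10,4$) is not an obstacle at all; you only need an upper bound by that quantity, not equality, and you have considerable slack (you could even prove a slightly stronger threshold). Two small points to make explicit when writing this up: $\Sigma$ itself is not symmetric (since $\sigma_{ij}-\sigma_{ji}=w_{ji}-w_{ij}$), but $\bbE[\Sigma]$ is, and both the operator-norm triangle inequality and the bound $\sigma_{\max}(M)\le\|M\|_F$ hold for arbitrary matrices, so nothing breaks; and $\alpha-\beta\ge 0$ because $p_0/n\le 1$, so the all-ones eigenvalue is indeed the largest singular value of $\bbE[\Sigma]$. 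Neither approach escapes the $\sqrt{n}$ dependence --- that would require genuine matrix concentration, as the paper's remark acknowledges.
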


\begin{remark*}
\label{rmk:thm:case:eg}

This proof is done by substituting \cref{thm:NE:unique:near-individual} and using Chebyshev's inequalities. Notice that $\sigma_{max}(\Sigma)$ can be bounded by the $\infty$-norm $\| \Sigma \|_\infty$, which is the maximum row sum of $\Sigma$.
We extract the sum of each row $i$ by $\gamma_i$, using Chebyshev's inequalities to bound the tail of $\gamma_i$ and union bound to control $\| \Sigma \|_\infty = \max_i \gamma_i$.

Notice that the result inevitably has a dependency on the square root of $n$ by Chebshev's inequality. 
Due to dependence between $\sigma_{ij}$ and $\sigma_{ij'}$, we can not directly use concentration inequalities, such as Chernoff's inequality \citep{chernoff:chernoff1952measure}, which can help decrease the dependency to $\log n$. 
However, we believe that the $\mathrm{poly} \log(n)$ dependency can be established, by the intrinsic independence on $\{w_{ij}\}_\ijinn$, which allows for further studies.

\end{remark*}

\subsubsection{The Application of \cref{thm:NE:unique:near-symmetric} and \cref{thm:NE:unique:equivalence}}

In this part, we assume that $W$ has a specific up-triangular structure, \ie, $w_{ij} = 0$ if $i>j$. Next, we will show that under this assumption, the NE of public good game is unique.

\begin{restatable}{theorem}{thmCaseTwo}
\label{thm:case:2}
If $W$ is an up-triangular matrix, \ie, $w_{ij}=0$ for $i>j$, then the public good game $G$ has a unique NE.
\end{restatable}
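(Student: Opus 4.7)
The strategy is to exhibit an equivalent game $G^2$ whose weight matrix is a small perturbation of the identity, and then invoke \cref{thm:NE:unique:near-symmetric} together with \cref{thm:NE:unique:equivalence}. Concretely, I would choose the diagonal scaling $D = \diag(M, M^2, \ldots, M^n)$ for a sufficiently large constant $M$ to be fixed at the end, take $\bm{b} = \zero$ (so that $\bm{m} = \zero$ by the formula in \cref{def:equivalence}), and let $G^2$ be the equivalent game determined by this data. Because $W$ is upper triangular with unit diagonal and entries in $\{0,1\}$, the transformed matrix $W^2 = D W D^{-1}$ is again upper triangular with unit diagonal, and its above-diagonal entries satisfy $|w^2_{ij}| = M^{i-j}|w_{ij}| \le M^{-(j-i)}$ for $j > i$, so $W^2$ converges to $I$ entrywise as $M \to \infty$.

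The next step is to apply \cref{thm:NE:unique:near-symmetric} to $G^2$ with the reference matrix $W^0 = I$, giving $\sigma_0 = \sigma_{min}(W^0) = 1$. The equivalence rescales the primitives to $c^2_i(y) = \frac{c_0}{2 d_i^2} y^2$ and, on the quadratic branch of $f$, $f^2_i(\ell) = \frac{a}{d_i}\ell - \frac{b}{d_i^2}\ell^2$, so $L_i = c_0/d_i^2$ and $C_i = 2b/d_i^2$ and the crucial ratio $2L_i / C_i = c_0/b$ is $i$-independent (the $d_i$'s cancel). Plugging these into the $\Sigma$-formula of \cref{thm:NE:unique:near-symmetric} gives, for $i \ne j$,
$$
\sigma_{ij} \;=\; \frac{2 L_i |w^2_{ij}|}{C_i} + |w^0_{ij} - w^2_{ij}| \;=\; \Bigl(\frac{c_0}{b}+1\Bigr)|w^2_{ij}|.
$$

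Both the row sums and the column sums of $\Sigma$ are then bounded by $(\tfrac{c_0}{b}+1)\sum_{k\ge 1} M^{-k} = (\tfrac{c_0}{b}+1)/(M-1)$, so the standard bound $\sigma_{max}(\Sigma) \le \sqrt{\|\Sigma\|_1\,\|\Sigma\|_\infty}$ yields $\sigma_{max}(\Sigma) \le (\tfrac{c_0}{b}+1)/(M-1)$. Choosing any $M > \tfrac{c_0}{b}+2$ forces $\sigma_{max}(\Sigma) < 1 = \sigma_0$, so \cref{thm:NE:unique:near-symmetric} delivers a unique NE in $G^2$, which by \cref{thm:NE:unique:equivalence} transfers one-to-one to a unique NE in $G$.

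The main technical obstacle is the clipping of $f$: strong concavity fails on the flat branch $k > a/(2b)$, so the $C_i$-concavity required by \cref{thm:NE:unique:near-symmetric} does not hold on the full domain $K_i$. I would handle this by first observing that no NE can place any $k_i$ on the clipped branch---if $k_i > a/(2b)$, then $f'(k_i) = 0$ while $c'(x_i) > 0$ for $x_i > 0$, so player $i$ strictly improves by decreasing $x_i$---and then restricting the effective domain to the strictly concave region before applying the theorem. All the Lipschitz and concavity constants above are computed in exactly this region, so the argument then goes through without change.
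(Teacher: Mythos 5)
Your proposal takes essentially the same route as the paper's proof: a diagonal game equivalence with geometrically growing scalings $d_i$ (your $M$ is the paper's $\varepsilon^{-1}$), offset $\bm{b}=\zero$, reference matrix $W^0=I$, the cancellation of the $d_i$'s in the ratio $2L_i/C_i$, and a matrix-norm bound on $\sigma_{max}(\Sigma)$ to invoke \cref{thm:NE:unique:near-symmetric} via \cref{thm:NE:unique:equivalence}, differing only in that your geometric-series bound removes the factor of $n$ from the threshold. Your added discussion of the clipped branch of $f$ addresses a point the paper silently skips, though the claim that no NE places $k_i$ on the flat branch is slightly too strong (a player saturated by her neighbors' efforts can have $k_i>a/(2b)$ with $x_i=0$ forced), so the cleaner statement is that the required concavity and Lipschitz constants need only hold on the region where best responses are determined by the first-order condition.
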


\begin{remark*}
In such scenarios, the conditions specified in \cref{thm:NE:unique:near-individual,thm:NE:unique:near-potential,thm:NE:unique:near-symmetric} may no longer be satisfied. However, we can employ the technique described in \cref{thm:NE:unique:equivalence} to transform the original game $G$ into another game $G'$
that meets the conditions outlined in \cref{thm:NE:unique:near-symmetric}.

Notice that this game must have a unique NE. It is because the following insight: since $w_{ij} $ for $i>j$ means that the efforts of players with lower identifiers $j$ have no externalities on players with higher identifiers $i$. Therefore, player $n$ is playing an individual-interest game, and thus has an optimal strategy $x_n^*$. Given $x_n^*$ fixed, player $n-1$ can also determine an optimal strategy $x_{n-1}^*$. Overall, each player can determine an optimal strategy in turn, which forms an equilibrium. 
However, our proof can give a stronger result that, if $w_{ij} = O(\varepsilon^{i+1-j})$\footnote{here $\varepsilon$ is a constant used in the proof} for $i>j$, we can also guarantee the uniqueness of NE.
\footnote{It is because after the transformation in the proof, the lower-triangular elements hold to be $O(\varepsilon)$.}

\end{remark*}

\section{Conclusion}
\label{sec:conclusion}


In this paper, we have presented a novel approach to understanding networked public goods games featuring heterogeneous players and convex cost functions. Through rigorous analysis and theoretical explorations, we have expanded the conventional understanding of strategic interactions in public goods provision within networked environments. Our model, which integrates heterogeneous benefits and convex costs, provides a more realistic portrayal of individual contributions and the resultant dynamics compared to traditional models with linear and homogeneous cost structures.

The theoretical insights and methodological contributions of our study on networked public goods games with heterogeneous players and convex costs fill a significant gap in the economic theory and also provide new perspectives for policymakers. Specifically, by understanding the conditions under which Nash Equilibrium can be achieved and sustained, policymakers can better design interventions and incentives in the context of the Internet economy and social networks, that encourage optimal contribution levels to public goods. In future research, it would be valuable to extend this model to consider dynamic environments, where players can adjust their strategies over time. Additionally, incorporating stochastic elements to model uncertainty in player interactions could provide further insights into the robustness of the model in more complex and realistic scenarios.

\begin{acks}
This work was supported by the National Natural Science Foundation of China (No. 12471339 and 62172012). 
Xiaotie Deng is the corresponding author. 
We thank all anonymous reviewers for their helpful feedback.
\end{acks}

\nocite{*}

%
\renewcommand{\refname}{References}
\renewcommand{\bibname}{References}
%
%
\bibliographystyle{plainnat}
\bibliography{reference}
\myx{Reference format needs to change in final version}
%

\newpage
\onecolumn
\appendix



\section{Omitted Proofs}
\label{app:omitted-prf}

\subsection{Proof of \texorpdfstring{\cref{lem:concave}}{}}
\lemConcave*
\begin{proof}
\label{prf:lem:concave}
The case  $c_0 = 0$ is trivial. 
Consider the case for $c_0 > 0$, if we take maximum on both sides of \cref{eq:concave}, we have
\begin{align*}
    g(x^*) =& \max_{y\in X}\quad g(y)
    \\
    \le& \max_{y\in \bbR^d}\quad g(x) + \langle y-x, \nabla g(x)\rangle - \frac{c_0}{2}\| y - x\|^2
\end{align*}

The maximum of RHS is achieved at $y^* = x + \frac{1}{c_0}\nabla g(x)$, and therefore,
\begin{align*}
    g(x^*) \le& g(x) + \langle y^* - x, \nabla g(x)\rangle - \frac{c_0}{2}\| y^* - x \|^2
    \\
    =& g(x) + \frac{1}{2c_0} \| \nabla g(x) \|^2
\end{align*}
which completes the proof.
\end{proof}

\subsection{Proof of \texorpdfstring{\cref{thm:SW:BRD}}{}}
\thmSWBRD*
\begin{proof}
\label{prf:thm:SW:BRD}
\emph{Case 1:}
We first consider the case that one of the conditions holds. If $c_i(x_i)$ is $c_0$-convex, then we know that $\SW(\bmx)$ is $c_0$-convex on $x_i$. Similarly, if $f_i(k_i)$ is $c_0$-concave, since $k_i$ depends linearly on $x_i$, we also know that $\SW(\bmx)$ is $c_0$-convex on $x_i$. Also if $\SW(\bmx)$ is $c_0$-concave on $x_i$ for all $i$, then $\SW(\bmx)$ is $c_0$-concave on $\bmx$. 

As a property of $c_0$-concave function $f(\bmx)$ and maximum point $\bmx^*$, we have
\begin{equation}
\label{eq:convex:ineq}
    f(\bmx^*) - f(\bmx) \le \frac{1}{2c_0} \|\nabla f(\bmx)\|^2.
\end{equation}

Define the energy function $E(t) = \SW(\bmx^*) - \SW(\bmx(t))$, then 
\begin{align*}
    \frac{\dd E(t)}{\dd t} =-\sum_\iinn \frac{\partial \SW}{\partial x_i}(\bmx(t)) \frac{\dd x_i}{\dd t}(t)
    = -\sum_\iinn \left(\frac{\partial \SW}{\partial x_i}(\bmx(t))\right)^2 = -\|\nabla \SW(\bmx(t))\|^2.
\end{align*}
Since $\SW(\bmx)$ is $c_0$-concave, by \cref*{eq:convex:ineq} we have $\|\nabla \SW(\bmx(t))\|^2 \ge 2c_0 (\SW(\bmx^*) - \SW(\bmx)) = 2c_0 E(t)$. Therefore, we have
\begin{align*}
    \frac{\dd E(t)}{\dd t} \le -2c_0 E(t).
\end{align*}
By standard differential equation analysis, we have $E(t) \le \exp(-2c_0 t)E(0)$. Taking $c = -2c_0$ completes the proof.

\emph{Case 2:}
Next, we consider the general case. Define $J(t) = t(\SW(\bmx^*) - \SW(\bmx(t))) + \frac{1}{2}\| \bmx^* - \bmx(t) \|^2$. We have
\begin{align*}
    \frac{\dd J(t)}{\dd t} =& \SW(\bmx^*) - \SW(\bmx(t))  - t \langle \frac{\pp \SW}{\pp \bmx}(\bmx(t)), \frac{\dd \bmx}{\dd t}\rangle - \langle \bmx^* - \bmx(t), \frac{\dd \bmx}{\dd t}\rangle
    \\
    =& \SW(\bmx^*) - \SW(\bmx(t)) - \langle \bmx^* - \bmx(t), \nabla \SW(\bmx(t))\rangle - t \|\nabla \SW(\bmx(t))\|^2
\end{align*}
By concavity of $\SW(\bmx)$ we have $\SW(\bmx^*) - \SW(\bmx(t)) \le \langle \bmx^* - \bmx(t), \nabla \SW(\bmx(t))\rangle$, then we have $\frac{\dd J(t)}{\dd t} \le 0$. This indicates that
\begin{align*}
    J(t) \le J(0) = \frac{1}{2}\| \bmx^* - \bmx(0) \|^2
\end{align*}
where $J(t) \ge t(\SW(\bmx^*) - \SW(\bmx(t)))$, and therefore
\begin{align*}
    \SW(\bmx^*) - \SW(\bmx(t)) \le \frac{1}{2t}\| \bmx^* - \bmx(0) \|^2
\end{align*}
Taking $c = \frac{\| \bmx^* - \bmx(0) \|^2}{2}$ completes the proof.

\end{proof}


\subsection{Proof of \texorpdfstring{\cref{thm:NE:exist}}{}}
\thmNEExist*
\begin{proof}
\label{prf:thm:NE:exist}

We first consider the case where all $c_i(x_i)$s are $c$-strongly concave for some $c>0$, in which case best responses of players are always unique and continuous. We will generalize the result to the general case in the second step.

\paragraph{Case 1: Strongly Convex Cost.}
Consider the best response function of players: $\BR:
\times_\iinn X_i \to \times_\iinn X_i$, where $\BR_i(\bmx)$ represents the best response of player $i$ given the effort profile $\bmx_{-i}$, omitting the dummy variable $x_i$. 

Consider the utility function of player $i$:
\begin{align*}
    u_i(\bmx) = f_i\left(\sum_\jinn w_{ij}x_j\right) - c_i(x_i)
    \\
    \BR_i(\bmx) = \arg\max_{x'_i} u_i(x'_i, \bmx_{-i})
\end{align*}

Now we will show the continuity of $\BR_i(\bmx)$. We assume the negation holds. It indicates that there are two sequences $\{\bmx^j_k\}_{k\in\bbN_+}, j\in\{1,2\}$ such that $\lim \bmx^1_k = \lim \bmx^2_k$ but $\lim \BR_i(\bmx^1_k) \ne \lim \BR_i(\bmx^2_k)$ or one of the limitations do not exist. If the latter holds, since the compactness of $X$ we can choose a sub-sequence of $\{\bmx^j_k\}$ such that the limitation of $\BR_i(\bmx^j_k)$ exists for $j=1,2$. Therefore we only consider the former case.

Let $d = \| \lim \BR_i(\bmx^1_k) - \lim \BR_i(\bmx^2_k) \|$. 
By optimality of $\BR_i(\bmx^j_k)$, we have
\begin{align*}
    u_i(\BR_i(\bmx^j_k), \bmx^j_{k,-i}) \ge u_i(\BR_i(\bmx^{-j}_k, \bmx^j_{k,-i}) + \frac{c}{2} \| \BR_i(\bmx^j_k) - \BR_i(\bmx^{-j}_k)\|^2,\quad j = 1,2
\end{align*}

Summing up with $j=1,2$, we have
\begin{align*}
    \sum_{j=1}^2 \left[ u_i(\BR_i(\bmx^j_k), \bmx^j_{k,-i}) - u_i(\BR_i(\bmx^{j}_k), \bmx^{-j}_{k,-i}) \right] \ge c \| \BR_i(\bmx^1_k) - \BR_i(\bmx^2_k)\|^2
\end{align*}

Then taking limits of $k\to \infty$, we know that LHS becomes $0$ and RHS becomes $c d^2 > 0$, which leads to a contradiction.




Since $\times_\iinn X_i$ is a bounded convex set, by Brouwer's fixed-point theorem, we know that there exists $\bmx \in X$ such that $\BR(\bmx) = \bmx$, which indicates that $\bmx$ is an NE.

\paragraph{Case 2: General Convex Cost.}

To deal with the case that $c_i(x_i)$ might be not strongly concave, we use the technique of utility reshaping. Specifically, we define another public good game $G^\beta = (\{f_i\}_\iinn, \{c^\beta_i\}_\iinn \{X_i\}_\iinn, W)$ where $\beta > 0$ and $c^\beta(x_i) = c(x_i) + \beta x_i^2$. It's obvious that in a public good game $G^\beta$, the cost functions are $\beta$-strongly concave, and the NE must exist.

We denote an NE of $G^\beta$ as $\bmx^\beta$. The next step is to construct a strategy profile $\bmx$, from $\bmx^\beta$ for $\beta >0$, such that $\bmx$ is an NE of $G$. Notice that a simple limit may not work, since there is no guarantee that $\bmx^\beta$ is continuous with $\beta$, even that $\bmx^\beta$ might be unmeasurable in the usual sense.

To resolve this issue, we notice that $\bmx^\beta \in X$ where $X$ is a compact set. By the Bolzano-Weierstrass theorem, we know that there exists a convergent subsequence $\bmx^{\beta_k} \to \bmx$ for some $\bmx \in X$ and  $ \beta_k \overset{k \to \infty}{\longrightarrow} 0$, $k\in\bbZ_+$.

Finally, we verify the NE property of $\bmx$. Notice that $X$ is compact again, we know that $c^{\beta_k}(x_i)$ converges to $c(x_i)$ consistently, therefore, $u^\beta_i(\bmx) = f_i(k_i) - c^{\beta_k}_i(x_i)$ also converges to $u_i(\bmx)$ consistently. Taking limits on both sides of the following equality,
\begin{align*}
    u^{\beta_k}_i(\bmx^{\beta_k}) =& \max_{x'_i\in X_i} u^{\beta_k}_i(x'_i,\bmx^{\beta_k}_{-i}),
\end{align*}
we achieve that, 
\begin{align*}
    u_i(\bmx) =& \max_{x'_i\in X_i} u_i(x'_i,\bmx_{-i}),
\end{align*}
which indicates that $\bmx$ is an NE of $G$.

\end{proof}

\subsection{The derivation of \texorpdfstring{\cref{lem:near-potential}}{} with the results of concave games in \texorpdfstring{\citet{concave_game-initial:rosen1965existence}}{}}
\lemNearPotential*
\begin{proof}

To prove \cref{lem:near-potential}, we only need to check the \emph{diagonal strictly concavity} of the public good game.

Denote $g(\bmx,\bmgg) = (\gamma_1\cdot \frac{\pp u_1}{\pp x_1}(\bmx),\dots, \gamma_n\cdot \frac{\pp u_n}{\pp x_n}(\bmx))$ and $G(\bmx, \bmgg)$ be the Jacobian of $g(\bmx, \bmgg)$ with respect to $\bmx$.
\citet{concave_game-initial:rosen1965existence} shows that if $(G(\bmx,\bmgg) + G^T(\bmx,\bmgg))$ is negative definite for all $\bmx$ where $G^T$ represents the transpose of $G$, then the original game must be diagonal strictly concave.

Now we compute the $G(\bmx,\bmgg)$ directly.
\begin{align*}
    G(\bmx, \bmgg) = \begin{bmatrix}
    \gamma_1\cdot \frac{\pp^2 u_1}{\pp x_1^2} (\bmx) & \cdots & \gamma_1\cdot\frac{\pp^2 u_1}{\pp x_1 \pp x_n}(\bmx)
    \\
    \vdots & \ddots & \vdots 
    \\
    \gamma_n\cdot \frac{\pp^2 u_n}{\pp x_n x_1} (\bmx) & \cdots & \gamma_n\cdot\frac{\pp^2 u_n}{\pp x_n \pp x_n}(\bmx)
    \end{bmatrix}
\end{align*}

Let $H(\bmx)$ be the Hessian matrix of $u(\bmx)$. We have that
\begin{align*}
    G(\bmx, \bmgg) =& H(\bmx) + \begin{bmatrix}
    \frac{\pp^2 (\gamma_1 u_1 - u)}{\pp x_1^2} (\bmx) & \cdots & \frac{\pp^2 (\gamma_1 u_1 - u)}{\pp x_1 \pp x_n}(\bmx)
    \\
    \vdots & \ddots & \vdots 
    \\
    \frac{\pp^2 (\gamma u_n - u)}{\pp x_n x_1} (\bmx) & \cdots & \frac{\pp^2 (\gamma u_n - u)}{\pp x_n \pp x_n}(\bmx)
    \end{bmatrix}
    \\
    \triangleq& H(\bmx) + I(\bmx,\bmgg) 
\end{align*}

By $c$-concavity of $u(\bmx)$, we know that $H(\bmx) + H^T(\bmx)$ is negative definite with the largest eigenvalue smaller than $-2c$. 

By $(\bmgg,\Sigma)$ near-potential property of the public good game, we know that the term $|\frac{\pp (\gamma_i u_i - u)}{\pp x_i \pp x_j}(\bmx)| \le \sigma_{ij}$. Therefore, the maximum eigenvalue of $I(\bmx,\bmgg) + I^T(\bmx, \bmgg)$ can be upper bounded by the maximum eigenvalue of $\Sigma + \Sigma^T$, which is also upper bounded by $2 \sigma_{max}(\Sigma)$.

Above all, for all $\bmx \ne 0$,
\begin{align*}
    \bmx^T G(\bmx, \bmgg) \bmx
    =& \bmx^T H(\bmx) \bmx + \bmx^T I(\bmx, \bmgg) \bmx
    \\
    \le& -c \| \bmx \|^2 + \| \bmx^T \| \| I(\bmx, \bmgg) \bmx \|
    \\
    \le& -c \| \bmx \|^2 + \| \bmx \| \sigma_{max}(I(\bmx, \bmgg)) \| \bmx \|
    \\
    \le& -c \| \bmx \|^2 + \| \bmx \| \sigma_{max}(\Sigma) \| \bmx \|
    \\
    =& (\sigma_{max}(\Sigma) - c) \| \bmx \|^2
    \\
    < 0
\end{align*}

Therefore, the public good game is diagonal strictly concave, and we completes the proof.

\end{proof}

\subsection{A Self-contained Proof of \texorpdfstring{\cref{lem:near-potential}}{}}
\begin{proof}
\label{prf:lem:near-potential}

\emph{Step 1: The uniqueness of NE.}

We prove the uniqueness of NE mainly by constructing a contraction mapping and using Banach fixed-point theorem.
The contraction mapping is constructed by a discretized version of the best-response dynamic:
\begin{align*}
    x'_i = g_i(\bmx) = x_i + \varepsilon \gamma_i \frac{\pp u_i}{\pp x_i}(x_i, \bmx_{-i}),\quad \forall \iinn
\end{align*}
for some $\varepsilon > 0$ small enough. Now we prove that $g: X \to X$ is a contraction mapping.

Consider two effort profiles $\bmx, \bmy \in X$, we have that
\begin{align*}
    \| g(\bmx) - g(\bmy) \|^2 =& \sum_\iinn \left( g_i(\bmx) - g_i(\bmy) \right)^2
    \\
    =& \sum_\iinn \left( x_i + \varepsilon\gamma_i \frac{\pp u_i}{\pp x_i}(x_i, \bmx_{-i}) - y_i - \varepsilon\gamma_i \frac{\pp u_i}{\pp x_i}(y_i, \bmy_{-i}) \right)^2
    \\
    =& \| \bmx - \bmy \|^2 + \varepsilon \langle \bmx - \bmy, \gamma_i\left(\frac{\pp u_i}{\pp x_i}(x_i,\bmx_{-i}) - \frac{\pp u_i}{\pp x_i}(y_i,\bmy_{-i})\right)_\iinn\rangle
    \\
    +& \varepsilon^2 \sum_\iinn\gamma_i^2 \left( \frac{\pp u_i}{\pp x_i}(x_i,\bmx_{-i}) - \frac{\pp u_i}{\pp x_i}(y_i,\bmy_{-i}) \right)^2
\end{align*}

We focus on the $\varepsilon$ term:
\begin{align*}
    & \langle \bmx - \bmy, \gamma_i \left( \frac{\pp u_i}{\pp x_i}(x_i,\bmx_{-i}) - \frac{\pp u_i}{\pp x_i}(y_i,\bmy_{-i}) \right)_\iinn \rangle
    \\
    =& \langle \bmx - \bmy, \frac{\pp u}{\pp \bmx}(\bmx) - \frac{\pp u}{\pp \bmx}(\bmy) \rangle
    + \langle \bmx - \bmy, \left( \frac{\pp (\gamma_i u_i - u)}{\pp x_i}(x_i,\bmx_{-i}) - \frac{\pp (\gamma_i u_i - u)}{\pp x_i}(y_i,\bmy_{-i}) \right)_\iinn \rangle
\end{align*}
Since the $c$-concavity of $u(\bmx)$, we have
\begin{align*}
    \langle \bmx - \bmy, \frac{\pp u}{\pp \bmx}(\bmx) - \frac{\pp u}{\pp \bmx}(\bmy) \rangle \le -c \| \bmx - \bmy \|^2
\end{align*}
By $\sigma_{ij}$-Lipschitzness of $\frac{\pp (\gamma_i u_i - u)}{\pp x_i}(\bmx)$ on $x_j$, we have
\begin{align*}
    & \langle \bmx - \bmy, \left( \frac{\pp (\gamma_i u_i - u)}{\pp x_i}(x_i,\bmx_{-i}) - \frac{\pp (\gamma_i u_i - u)}{\pp x_i}(y_i,\bmy_{-i}) \right)_\iinn \rangle 
    \\
    \le& \sum_\iinn \left( \frac{\pp (\gamma_i u_i - u)}{\pp x_i}(x_i,\bmx_{-i}) - \frac{\pp (\gamma_i u_i - u)}{\pp x_i}(y_i,\bmy_{-i}) \right) |x_i - y_i|
    \\
    \le& \sum_\iinn \sum_\jinn \sigma_{ij} |x_j - y_j| |x_i-y_i|
    \\
    =& \| \bmx - \bmy \|^2 \sum_\iinn \sum_\jinn \sigma_{ij} z_i z_j
    \\
    =& \| \bmx - \bmy \|^2 \bmz^T \Sigma \bmz
    \\
    \le& \sigma_{max}(\Sigma) \| \bmx - \bmy \|^2
\end{align*} 

where in the third equality, $z_i \coloneqq \frac{|x_i - y_i|}{\| \bmx - \bmy\|}$, we have $\| z_i \| = 1$. Also notice that $\sigma_{max}(\Sigma) = \max_{\|\bmz\| = \|\bmz'\| = 1} \bmz^T \Sigma \bmz'$.

Therefore, the $\varepsilon$ term:
\begin{align*}
    & \langle \bmx - \bmy, \gamma_i\left( \frac{\pp u_i}{\pp x_i}(x_i,\bmx_{-i}) - \frac{\pp u_i}{\pp x_i}(y_i,\bmy_{-i}) \right)_\iinn \rangle
    \\
    \le& (\sigma_{max} - c) \| \bmx - \bmy \|^2
\end{align*}

Therefore, we can choose $\varepsilon$ so small such that $\| g(\bmx) - g(\bmy) \|^2 \le (1 + \frac{\varepsilon}{2}(\sigma_{max} - c)) \| \bmx - \bmy \|^2 < \| \bmx - \bmy \|^2$ for all $\bmx,\bmy$, which indicates that $g$ is a contraction mapping. By Banach fixed-point theorem, we know that there exists a unique fixed point $\bmx^*$ of $g$, which means that there is a unique $\bmx^*$ such that $\frac{\pp u_i}{\pp x_i}(\bmx^*) = 0$ for all $i$, indicating that $\bmx^*$ is the unique NE.

\emph{Step 2: The exponential convergence rate.}

To do this, we aim at constructing an energy function $E(\bmx)$ such that it holds the following properties:
\begin{itemize}
    \item $E(\bmx) \ge 0$ for all $\bmx$, the equality holds if $\bmx = \bmx^*$.
    \item $\frac{\dd E(\bmx(t))}{\dd t} \le -p_0 E(\bmx(t))$ for some $c_0 > 0$.
\end{itemize}

As long as these properties hold, we immediately get that $E(\bmx(t)) \le E(\bmx(0)) \exp(-p_0 t)$. Then taking $c_0 = p_0$ completes the proof.

We first define the energy function $E(\bmx) = u(\bmx^*) - u(\bmx) + \langle \bmx - \bmx^* , \nabla u(\bmx^*) \rangle$. Since $u(\bmx)$ is a concave function, we know that,
\begin{align*}
    u(y) - u(x) \le \langle y-x, \nabla u(x)\rangle
\end{align*}
Take $x = \bmx^*$ and $y = \bmx$, we derive that $E(\bmx) \ge 0$ for all $\bmx$. When $\bmx = \bmx^*$, we have $E(\bmx^*) = 0$. We also know that $E(\bmx)$ is $c$-strongly convex function.

By little computation and define $v_i(\bmx) = \gamma_i u_i(\bmx) - u(\bmx)$, we have
\begin{align*}
    \frac{\pp E}{\pp \bmx}(\bmx) =& \left(- \frac{\pp u}{\pp \bmx}(\bmx) + \frac{\pp u}{\pp \bmx}(\bmx^*)\right)
    \\
    \frac{\dd x_i}{\dd t}(t) =& \gamma_i \frac{\pp u_i}{\pp x_i}(\bmx(t))
    = \frac{\pp v_i}{\pp x_i}(\bmx(t)) + \frac{\pp u}{\pp x_i}(\bmx(t))
    = \frac{\pp v_i}{\pp x_i}(\bmx(t)) + \frac{\pp u}{\pp x_i}(\bmx^*) - \frac{\pp E}{\pp x_i}(\bmx(t))
\end{align*}

Next, we compute the derivative of $E(\bmx(t))$:
\begin{align*}
    \frac{\dd E}{\dd t}(\bmx(t))
   & =\langle \frac{\pp E}{\pp \bmx}(\bmx(t)), \frac{\dd \bmx}{\dd t}(t)\rangle
    \\
    =& - \| \frac{\pp E}{\pp \bmx}(\bmx(t)) \|^2 \cdots\text{first term}
    \\
    +& \langle \frac{\pp E}{\pp \bmx}(\bmx(t)), \frac{\pp u}{\pp \bmx}(\bmx^*) \rangle
    + \sum_\iinn \frac{\pp v_i}{\pp x_i}(\bmx(t)) \frac{\pp E}{\pp x_i}(\bmx(t)) \cdots\text{second term}
    \\
\end{align*}


We also know $\frac{\pp u_i}{\pp x_i}(\bmx^*) = 0$ by definition of NE.
Combining them in the second term, we achieve,
\begin{align*}
    \text{second term} =& \sum_\iinn \frac{\pp E}{\pp x_i}(\bmx(t))\cdot \frac{\pp u}{\pp x_i}(\bmx^*)
    + \sum_\iinn \frac{\pp v_i}{\pp x_i}(\bmx(t)) \cdot \frac{\pp E}{\pp x_i}(\bmx(t))
    \\
    =& \sum_\iinn \frac{\pp E}{\pp x_i}(\bmx(t))\cdot \frac{\pp (u - \gamma_i u_i)}{\pp x_i}(\bmx^*)
    + \sum_\iinn \frac{\pp v_i}{\pp x_i}(\bmx(t)) \cdot \frac{\pp E}{\pp x_i}(\bmx(t))
    \\
    =& - \sum_\iinn \frac{\pp E}{\pp x_i}(\bmx(t))\cdot \frac{\pp v_i}{\pp x_i}(\bmx^*)
    + \sum_\iinn \frac{\pp v_i}{\pp x_i}(\bmx(t)) \cdot \frac{\pp E}{\pp x_i}(\bmx(t))
    \\
    =& \sum_\iinn \frac{\pp E}{\pp x_i}(\bmx(t))\cdot (\frac{\pp v_i}{\pp x_i}(\bmx(t)) - \frac{\pp v_i}{\pp x_i}(\bmx^*))
    \\
\end{align*}

Denote $\Delta v_i = \frac{\pp v_i}{\pp x_i}(\bmx(t)) - \frac{\pp v_i}{\pp x_i}(\bmx^*)$ and $\Delta \bmv = (\Delta v_1,...,\Delta v_n)$, we have,
\begin{align*}
    \text{second term} =& \langle \frac{\pp E}{\pp \bmx}(\bmx(t)), \Delta \bmv \rangle
    \\
    \le & \|\frac{\pp E}{\pp \bmx}(\bmx(t))\| \|\Delta \bmv\|
\end{align*}

We also know that $\frac{\pp v_i}{\pp x_i}$ is $\sigma_{ij}$-Lipschitz on $x_j$, now we consider $\| \Delta v_i\|$,
\begin{align*}
    \|\Delta v_i \| =& \max_{\|z \|=1} \sum_\iinn z_i \Delta v_i
    \\
    \le& \max_{\|z \|=1} \sum_\iinn\sum_\jinn z_i \sigma_{ij} |x_j(t) - x_j^*|
    \\
    \le& \| \bmx(t) - \bmx^* \| \max_{\|z \|=1, \| y\| =1} \sum_\iinn\sum_\jinn \sigma_{ij} z_i y_j
    \\
    =& \| \bmx(t) - \bmx^* \| \sigma_{max}(\Sigma)
    \\
    \le& \frac{\sigma_{max}(\Sigma)}{c} \| \frac{\pp E}{\pp \bmx}(\bmx(t))\|
\end{align*}

Combining these, we have,
\begin{align*}
    \frac{\dd E}{\dd t}(\bmx(t)) \le& -(1-\frac{\sigma_{max}(\Sigma)}{c}) \| \frac{\pp E}{\pp \bmx}(\bmx(t))\|^2
    \\
    \le& -2(c - \sigma_{max}(\Sigma)) E(\bmx(t))
\end{align*}

Take $p_0 = 2(c - \sigma_{max}(\Sigma))$, we complete the proof.

\end{proof}

\subsection{Proof of \texorpdfstring{\cref{thm:NE:unique:near-potential}}{}}
\thmNENearPotential*
\begin{proof}
\label{prf:thm:NE:unique:near-potential}
consider the potential function $u(\bmx) = f(\sum_\iinn x_i) - \sum_\iinn \gamma_i c_i(x_i)$. We have that $u(\bmx)$ is $c$-strongly concave on $x_i$ for all $i$, thus $c$-strongly concave on $\bmx$.

We also derive that,
\begin{align*}
    \frac{\pp u}{\pp x_i}(\bmx) =& f'(\sum_\iinn x_i) - \gamma_i c_i'(x_i)
    \\
    \gamma_i \frac{\pp u_i}{\pp x_i}(\bmx) =& \gamma_i f'_i(x_i + \sum_\jnei w_{ij} x_j) - \gamma_i c'_i(x_i)
    \\
    \frac{\pp (\gamma_i u_i - u)}{\pp x_i}(\bmx) =& \gamma_i f'_i(x_i + \sum_\jnei w_{ij} x_j) - f'(\sum_\iinn x_i)
    \\
    =& \gamma_i f'_i(x_i + \sum_\jnei w_{ij} x_j) - f'(x_i + \sum_\jnei w_{ij} x_j)\cdots \text{first term}
    \\
    +& f'(\sum_\jinn w_{ij} x_j) - f'(\sum_\iinn x_i)\cdots \text{second term}
\end{align*}

For the first term, since $f_i(k)$ is $(\gamma_i, \sigma_i)$-close to $f(k)$, we know that 
$\gamma_i f'_i(x_i + \sum_\jnei w_{ij} x_j) - \gamma f'(x_i + \sum_\jnei w_{ij} x_j)$ 
is $\sigma_i |w_{ij}|$-Lipschitz on $x_j$ for all $j$.

For the second term, by Lipschitzness of $f'(k)$ and $f''(k)$ and some computations, consider two points $x_j$ and $x_j + \delta$,
\begin{align*}
    & | \left(f'(\sum_\jinn w_{ij} x_j) - f'(\sum_\iinn x_i)\right) - \left(f'(\sum_\knej w_{ik} x_k + w_{ij}(x_j + \delta)) - f'(\sum_\knej x_k + (x_j + \delta))\right) |
    \\
    =& | \left(f'(\sum_\jinn w_{ij} x_j) - f'(\sum_\jinn w_{ij} x_j + \delta)\right) - \left( f'(\sum_\iinn x_i) - f'(\sum_\iinn x_i + \delta)\right)
    \\
    +& f'(\sum_\jinn w_{ij} x_j + \delta) - f'(\sum_\jinn w_{ij} x_j + w_{ij}\delta) |
    \\
    \le& | \left(f'(\sum_\jinn w_{ij} x_j) - f'(\sum_\jinn w_{ij} x_j + \delta)\right) - \left( f'(\sum_\iinn x_i) - f'(\sum_\iinn x_i + \delta)\right) |\cdots\text{first term}
    \\
    +& | f'(\sum_\jinn w_{ij} x_j + \delta) - f'(\sum_\kinn w_{ik} x_k + w_{ij}\delta) |\cdots\text{second term}
    \\
\end{align*}

The first term is upper bounded by $\delta \sum_\jinn |w_{ij}-1| \max\{ -\xlow_j, \xhigh_j \} c^2$, while the second term is upper bounded by $|w_{ij}-1| \delta c^1$.
Overall, this term is $c^1 |w_{ij}-1| + c^2 \sum_\jinn |w_{ij}-1| \max\{ -\xlow_j, \xhigh_j \}$-Lipschitz on $x_j$.

Above all, $u_i$ is $(\gamma_i, \beta_{ij})$-close to $u$ with respect to $x_j$, where
\begin{align*}
    \beta_{ij} = \sigma_i |w_{ij}| + c^1 |w_{ij}-1| + c^2 \sum_\jinn |w_{ij}-1| \max\{ -\xlow_j, \xhigh_j \}
\end{align*}

Therefore, we complete the proof by \cref{lem:near-potential}.

\end{proof}

\subsection{Proof of \texorpdfstring{\cref{thm:NE:unique:near-symmetric}}{}}
\thmNENearSymmetric*
\begin{proof}
\label{prf:thm:NE:unique:near-symmetric}

\citet{public-network-direct-BRD:bayer2023best} shows that, when $W$ is symmetric and the cost functions $c_i(x)$s are linear, then the best-response dynamic converges. The insight is that when $c_i(x)$s are linear, each player $i$ has its own marginal cost $c_i$, and the ideal $k_i$ such that $f'_i(k_i) = c_i$. Therefore, every player $i$'s best-response to her ideal gain $k_i$, and $\phi(\bmx) = \bmk^T \bmx - \frac{1}{2} \bmx^T W \bmx$ becomes a potential function. Moreover, the NE must be unique if $W$ is positive semi-definite.

Our proof follows this insight and tries to utilize the conclusion of \cref{lem:near-potential}. By \cref{thm:NE:exist} we know that there is an NE $\bmx^*$, let $\bmk^* = W \bmx^*$ be the gain profile in the equilibrium level. We construct the potential following,
\begin{align*}
    u(\bmx) = \bmk^{*T} \bmx - \frac{1}{2} \bmx^T W_0 \bmx
\end{align*}
It's easy to observe that $u(\bmx)$ is $\sigma_0$-strongly concave.

Now consider $y_i(\bmx_{-i})$ as the best response function of player $i$, \ie, $y_i(\bmx_{-i})$ is the gain level $k_i$ such that, it's optimal for player $i$ to choose the effort level $x_i$ such that her gain level becomes $k_i$. 
If we define $u^0_i(k_i,\bmx_{-i})$ is the utility of player $i$ when other players play $\bmx_{-i}$ and the gain level of player $i$ is $k_i$, we can write that,
\begin{align*}
    u^0_i(k_i,\bmx_{-i}) =& f_i(k_i) - c_i(k_i - \sum_\jnei w_{ij} x_j)
    \\
    =& (- c^0_i(k_i, \bmx_{-i})) + f_i(k_i)
\end{align*}
where $c^0_i(k_i, \bmx_{-i}) = c_i(k_i - \sum_\jnei w_{ij} x_j)$ is the cost function of player $i$ (in another form).


We have $c'_i(k_i)$ is $L_i$-Lipschitz on $k_i$ by assumption, therefore, we could easily find that $\frac{\pp^2 c^0_i}{\pp k_i \pp x_j}(k_i,\bmx_{-i})$ is upper bounded by $|w_{ij}| L_i$. 

Together with $f_i(k_i)$ is $C_i$-concave on $k_i$, by \cref{lem:argmax:Lipschitz}, we have that
\begin{align*}
    y_i(\bmx_{-i}) = \argmax_{k_i}\quad u^0_i(k_i,\bmx_{-i}) = ( - c^0_i(k_i, \bmx_{-i})) + f_i(k_i)
\end{align*}
is $\frac{2L_i|w_{ij}|}{C_i}$-Lipschitz on $x_j$.

Now we define the utility function for player $i$ in the near-potential game,
\begin{align*}
    u_i(\bmx) = y_i(\bmx_{-i}) x_i - \frac{x_i^2}{2} - \sum_\jnei w_{ij} x_i x_j
\end{align*}

If $\bmx$ is an NE for the game $\{u_i(\bmx)\}_\iinn$ constructed above, we have that
\begin{align}
    0 =& \frac{\pp u_i}{\pp x_i}(\bmx) = y_i(\bmx_{-i}) - x_i - \sum_\jnei w_{ij} x_j
    \\
    \Rightarrow y_i(\bmx_{-i}) =& x_i + \sum_\jnei w_{ij} x_j
\end{align}
\ie, the choice of $x_i$ will make her gain level to $y_i(\bmx_{-i})$, which is also the optimal gain level of player $i$ given $\bmx_{-i}$ by definition of $y(\bmx_{-i})$, therefore, $\bmx$ is also an NE of the original public good game $G$.

The last step is to show that the NE for the near-potential game $\{u_i(\bmx)\}_\iinn$ is unique. We derive that,
\begin{align*}
    \frac{\pp u}{\pp x_i}(\bmx) =& k^*_i - \sum_\jinn w^0_{ij} x_j
    \\
    \frac{\pp u_i}{\pp x_i}(\bmx) =& y_i(\bmx_{-i}) - \sum_\jinn w_{ij} x_j
\end{align*}

It's obvious that $\frac{\pp (u_i - u)}{\pp x_i}(\bmx)$ is $\sigma_{ii} = |w^0_{ii} - 1| = 0$-Lipschitz on $x_i$ (constant on $x_i$) and $\sigma_{ij} = \frac{2L_i|w_{ij}|}{C_i} + |w^0_{ij} - w_{ij}|$-Lipschitz on $x_j$. Thus the constructed utilities, $\{u_i(\bmx)\}_\iinn$, are $(\ones,\Sigma)$-near potential to $u(\bmx)$. By \cref{lem:near-potential}, we have that the NE of the game $\{u_i(\bmx)\}_\iinn$ is unique, which completes the proof.

\end{proof}

\subsection{Proof of \texorpdfstring{\cref{thm:NE:unique:equivalence}}{}}
\thmNEEquivalence*
\begin{proof}
\label{prf:thm:NE:unique:equivalence}

We prove this theorem by reduction, \ie, there is an injective function $g: X^1 \to X^2$ such that if $\bmx^1$ is an NE in $G^1$, then $\bmx^2$ is an NE in $G^2$.

We construct $\bmx^2$ by following,
\begin{align*}
    x^2_i = d_i x^1_i + b_i
\end{align*}
The construction is injective.
Therefore, we have 
\begin{align*}
    k^2_i =& \sum_\jinn w^2_{ij} x^2_j = \sum_\jinn \frac{d_i}{d_j}w^1_{ij} (d_j x^1_j + b_j)
    \\
    =& d_i \sum_\jinn w^1_{ij} x^1_j + d_1 \sum_\jinn \frac{w^1_{ij} b_j}{d_j}
    \\
    =& d_i k^1_i + m_i
\end{align*}

Now fix $\bmx^2_{-i}$, consider the case that player $i$ choose action $x^2_i$:
\begin{align*}
    u^2_i(x^2_i, \bmx^2_{-i}) =& f^2_i(k^2_i) - c^2_i(x^2_i)
    \\
    =& f^2_i(d_i k^1_i + m_i) - c^2_i(d_i x^1_i + b_i)
    \\
    =& f^1_i(k^1_i) - c^1_i(x^1_i)
\end{align*}
which is the maximum utility player $i$ can achieve, since $\bmx^1$ is an NE of $G^1$.
Therefore, $\bmx^2$ is an NE of $G^2$.

We also need to prove that the inverse direction also holds, to clarify this statements, we show that the equivalence relation is symmetric, \ie, if $G^1$ is equivalent to $G^2$, then $G^2$ is also equivalent to $G^1$.

To show this, we let $d'_i = 1/d_i$ and $b'_i = -b_i/d_i$, we have $d'_i \in \bbR_{++}$ and $b'_i \in \bbR$. 
Denote $D' = D^{-1} = \diag(d'_1,...,d'_n)$, then we have
\begin{align*}
    W^1 =& D' W^2 D'^{-1}
    \\
    \xlow^1_i =& d'_i \xlow^2_i + b'_i
    \\
    \xhigh^1_i =& d'_i \xhigh^2_i + b'_i
    \\
    c^2_i(x) =& c^1_i(d'_i x + b'_i)\quad\forall x \in X^2_i
    \\
    f^2_i(k) =& f^1_i(d'_i k + m'_i)\quad\forall k \in K^2_i
\end{align*}
for some constants $\{m'_i\}_\iinn$.
This indicates that an NE of $G^2$ also corresponds to an NE of $G^1$, which completes the proof.

\end{proof}

\subsection{Proof of \texorpdfstring{\cref{thm:comparative:money}}{}}
\thmComparativeMoney*
\begin{proof}
\label{prf:thm:comparative:money}
$\bmx^*(t)$ should satisfy,
\begin{equation}
\label{eq:comparative:money:NE}
    x^*_i(t) = \argmax_{x_i} f_i(x_i + \sum_\jnei w_{ij} x^*_j(t) + \delta_i t) - c_i(x_i),
\end{equation}

By \cref{eq:comparative:money:NE}, we have
\begin{align*}
    c'_i(x^*_i(t)) - f'_i(k^*_i(t) + \delta_i t) = 0,
\end{align*}
which carves out an implicit function
\begin{align*}
    F_i(\bmx^*(t), t) = 0,\quad \forall \iinn
\end{align*}
with $F_i(\bmx,t) = c'_i(x_i) - f'_i(k_i + \delta_i t)$. Take $E_i(t) = F_i(\bmx^*(t), t)$, by implicit function theorem, we have

\begin{equation}
\label{eq:comparative:money:implicit}
    \frac{\dd E_i}{\dd t}(t) = \frac{\pp F_i}{\pp \bmx}(\bmx^*(t),t) \frac{\dd \bmx^*}{\dd t}(t) + \frac{\pp F_i}{\pp t}(\bmx^*(t),t) = 0
\end{equation}

Together \cref{eq:comparative:money:implicit} with all $i$, we have 
\begin{align*}
    \frac{\dd \bmx^*}{\dd t}(t) = - \left( \frac{\pp F}{\pp \bmx} \right)^{-1} \frac{\pp F}{\pp t}(\bmx^*(t),t)
\end{align*}
where $F(\bmx^*(t),t) = (F_1(\bmx^*(t),t), F_2(\bmx^*(t),t), \dots, F_n(\bmx^*(t),t))$, by computation we have,
\begin{align*}
    & \frac{\pp F}{\pp t}(\bmx^*(t),t) = -\diag(\bmf''(\bmk^*(t) + t\bmdd)) \bmdd
    \\
    & \frac{\pp F}{\pp \bmx}(\bmx^*(t),t) = \diag(\bmc''(\bmx^*(t))) - \diag(\bmf''(\bmk^*(t) + t\bmdd))W
\end{align*}

By $\bmu(t) = \bmu(x^*(t);t)$, we derive that,
\begin{align*}
    \bmu'(0) = \frac{\pp \bmu}{\pp t}(\bmx^*;0) + \frac{\pp \bmu}{\pp \bmx}(\bmx^*;0) \frac{\dd \bmx^*}{\dd t}(0)
\end{align*}
where
\begin{align*}
    & \frac{\pp \bmu}{\pp t}(\bmx^*;0) = \diag(\bmf'(\bmk^*)) \bmdd
    \\
    & \frac{\pp \bmu}{\pp \bmx}(\bmx^*;0) = \diag(\bmf'(\bmk^*)) W - \diag(\bmc'(\bmx^*))
    \\
    & \frac{\dd \bmx^*}{\dd t}(0) = \left( \diag(\bmc''(\bmx^*)) - \diag(\bmf''(\bmk^*))W \right)^{-1} \diag(\bmf''(\bmk^*)) \bmdd
\end{align*}

By equilibrium condition, we have
\begin{align*}
    \diag(\bmc'(\bmx^*)) = \diag(\bmf'(\bmk^*))
\end{align*}
and thus
\begin{align*}
    \frac{\pp \bmu}{\pp \bmx}(\bmx^*;0) = \diag(\bmf'(\bmk^*))(W - I)
\end{align*}

Above all,
\begin{align*}
    \bmu'(0) =& \diag(\bmf'(\bmk^*)) \left( \bmdd + (W-I) \left( \diag(\bmc''(\bmx^*)) - \diag(\bmf''(\bmk^*))W \right)^{-1} \diag(\bmf''(\bmk^*)) \bmdd \right)
    \\
    =& \diag(\bmf'(\bmk^*)) \left( I + (W-I) \left( \diag(\bmc''(\bmx^*)) - \diag(\bmf''(\bmk^*))W \right)^{-1} \diag(\bmf''(\bmk^*)) \right) \bmdd
    \\
    =& \diag(\bmf'(\bmk^*)) \left( I + (W-I) \left( \diag(\bmc''(\bmx^*)/\bmf''(\bmk^*)) - W \right)^{-1} \right) \bmdd
    \\
    =& \diag(\bmf'(\bmk^*)) \left( \left( \diag(\bmc''(\bmx^*)/\bmf''(\bmk^*)) - W \right)\cdot \left( \diag(\bmc''(\bmx^*)/\bmf''(\bmk^*)) - W \right)^{-1}
    \right.
    \\
    +& \left. (W-I) \left( \diag(\bmc''(\bmx^*)/\bmf''(\bmk^*)) - W \right)^{-1} \right) \bmdd
    \\
    =& \diag(\bmf'(\bmk^*)) \left( (\diag(\bmc''(\bmx^*)/\bmf''(\bmk^*))-I) \left( \diag(\bmc''(\bmx^*)/\bmf''(\bmk^*)) - W \right)^{-1} \right) \bmdd
    \\
    =& \diag(\bmf'(\bmk^*)) \cdot \diag(\bmc''(\bmx^*) - \bmf''(\bmk^*)) \cdot \left[ \diag(\bmc''(\bmx^*)) - W \diag(\bmf''(\bmk^*)) \right]^{-1} \bmdd
\end{align*}
which completes the proof.

\end{proof}

\subsection{Proof of \texorpdfstring{\cref{thm:case:1}}{}}
\thmCaseOne*
\begin{proof}
\label{prf:thm:case:1}

We firstly substitute the model into the conditions of \cref{thm:NE:unique:near-individual}.
\begin{itemize}
    \item For the first condition, we just specify $\gamma_i \equiv 1$, then $f_i(x+d) - c_i(x)$ is $c$-concave.
    \item For the second condition, $f'_i(k)$ is $2b$-Lipschitz on $k$ for all $i$.
    \item For the third condition, we need that $\sigma_{max}(\Sigma) < \frac{c}{2b}$, where $\sigma_{ij} = \sum_\knei w_{ki}w_{kj}$.
\end{itemize}

It's well-known that $\sigma_{max}(\Sigma) \le \min\{ \| \Sigma \|_\infty , \| \Sigma \|_1 \}$, where $\| \Sigma \|_\infty$ ($\| \Sigma \|_1$) represents the $\infty$-norm (one-norm) of $\Sigma$, \ie, maximum row sum (column sum) of $\Sigma$, respectively. Specifically,
\begin{align*}
    \| \Sigma \|_\infty =& \max_{i} \sum_\jinn \sum_\knei w_{ki}w_{kj}
    \\
    =& \max_i \sum_\knei w_{ki} + \sum_\jnei \sum_\knei w_{ki} w_{kj}
    \\
    =& \max_i \sum_\knei w_{ki} + \sum_\jnei w_{ji} + \sum_\jnei \sum_{k\ne i,j} w_{ki} w_{kj}
    \\
    =& \max_i \quad 2\sum_\jnei w_{ji} + \sum_\jnei \sum_{k\ne i,j} w_{ki} w_{kj}
\end{align*}

Denote $\delta_i = 2\sum_\jnei w_{ij} + \sum_\jnei \sum_{k\ne i,j} w_{ki} w_{kj}$, then,
\begin{align*}
    \bbE_W [\delta_i] = 2(n-1)p + (n-1)(n-2)p^2 \le 2p_0 + p_0^2
\end{align*}
When the context is clear we denote $\bbE_W[\delta] = \bbE_W[\delta_i]$, since this term is constant.

We also compute the second-order moment of $\delta_i$, \ie,
\begin{align*}
    \delta_i^2
    =& \left( 2\sum_{j_1\ne i} w_{j_1,i} + \sum_{j_1\ne i, k_1\ne j_1, i} w_{j_1,i} w_{j_1,k_1} \right)^2
    \\
    =& 4 \sum_{j_1\ne i, j_2\ne i} w_{j_1,i}w_{j_2,i}
    + 4 \sum_{j_2\ne i}\sum_{j_1\ne i, k_1\ne j_1,i} w_{j_1,i}w_{j_1,k_1}w_{j_2,i}
    + \sum_{j_1\ne i, k_1\ne j_1, i}\sum_{j_2\ne i, k_2\ne j_2, i} w_{j_1,i}w_{j_2,i}w_{j_1,k_1}w_{j_2,k_2}
\end{align*}

By simple counting, we have,
\begin{align*}
    \bbE_W[\delta_i^2] = 4(n-1)p + 9(n-1)(n-2)p^2 + 6(n-1)(n-2)^2 p^3 + (n-1)(n-2)(n^2-5n+5) p^4
\end{align*}
and the variance of $\delta_i$,
\begin{align*}
    & \Var_W [\delta_i] = \bbE_W [\delta_i^2] - \bbE_W^2[\delta_i]
    \\
    =& 4(n-1)p + (n-1)(5n-14)p^2 + (n-1)(n-2)(2n-8)p^3 + (n-1)(n-2)(-2n+3)p^4
    \\
    \le& 4p_0 + 5p_0^2 + 2 p_0^3
\end{align*}

Combining them by using Chebyshev inequalities, 
\begin{align*}
    \Pr[\delta_i - \bbE[\delta] \ge k]\le \frac{\Var[\delta_i]}{k^2}
\end{align*}
and
\begin{align*}
    \Pr[\| \Sigma \|_\infty - \bbE[\delta] \ge k]\le \frac{n\Var[\delta_i]}{k^2}
\end{align*}
since $\| \Sigma \|_\infty = \max_\iinn \delta_i$.

To make $\mathrm{RHS} = \frac{1}{2}$, we take $k = \sqrt{n(8p_0 + 10 p_0^2 + 4 p_0^3)}$, therefore, with probability at least $\frac{1}{2}$, we have that $\sigma_{max}(\Sigma) \le \| \Sigma \|_\infty \le 2p_0 + p_0^2 + \sqrt{n(8p_0 + 10 p_0^2 + 4 p_0^3)< \frac{c}{2b}}$, and the game has unique NE by \cref{thm:NE:unique:near-individual}.
    
\end{proof}

\subsection{Proof of \texorpdfstring{\cref{thm:case:2}}{}}
\thmCaseTwo*
\begin{proof}
\label{prf:thm:case:2}
We denote the original game as $G^1 = \{\{f^1_i\},\{c^1_i\},\{X^1_i\},W^1\}$, and the transformed game as $G^2 = \{\{f^2_i\},\{c^2_i\},\{X^2_i\},W^2\}$. $f^1_i$ is $2b$-concave and $(c^{1}_i)'$ is $c$-Lipschitz.

To do a game transformation, we need to construct a scaling vector $\bmd\in\bbR_{++}$ and offset vector $\bmb\in\bbR_{++}$. We set $\bmb = \zeros$ without loss of generality, we have that,
\begin{align*}
    w^2_{ij} = \frac{d_i w^1_{ij}}{d_j}
\end{align*}

$w^2_{ij}$ also equals  $0$ if $i>j$ and $1$ if $i = j$, for $i<j$, we observe that if $d_i << d_j$ as long as $j>i$, then $w^2_{ij}$ can be arbitrarily small. 
In fact, we let $d_i = \varepsilon^{-i}$,
where $\varepsilon>0$ is a pre-specific constant. Therefore, we have that
\begin{align*}
    \begin{cases}
    w^2_{ij} =0\quad &\text{if $i>j$}
    \\
    w^2_{ij} =1\quad &\text{if $i = j$}
    \\
    w^2_{ij} \le \varepsilon &\text{if $i<j$} 
    \end{cases}
\end{align*}

By this transformation, we have that $f^2_i$ is $\frac{2b}{d^2_i}$-concave and $(c^2_i)'$ is $\frac{c}{d^2_i}$-Lipschitz, in \cref{thm:NE:unique:equivalence}.

Now we prove that the game $G^2$ satisfies the conditions in \cref{thm:NE:unique:near-symmetric}. We choose $W^0=I$ so that $\sigma_0 = \sigma_{min}(W^0) = 1$. $L_i = \frac{c}{d^2_i}$ and $C_i = \frac{2b}{d^2_i}$. We compose $\Sigma - \Sigma^1 + \Sigma^2$, where $\Sigma^k = \{\sigma^k_{ij}\}$ and $\sigma^1_{ij} = \frac{2L_i |w^2_{ij}|}{C_i}$, $\sigma^2_{ij} = | w^0_{ij} - w^2_{ij}|$. We have $\sigma_{max}(\Sigma) \le \sigma_{max}(\Sigma^1) + \sigma_{max}(\Sigma^2)$.

Since $|w^0_{ij}-w^2_{ij} |= 0 $ if $i=j$ and $\le \varepsilon$ is $i\ne j$, we know that $\sigma_{max}(\Sigma^2)$ is bounded by $n\varepsilon$. Besides, $\sigma^1_{ij} = \frac{2L_i |w^2_{ij}|}{C_i} \le \frac{c\varepsilon}{b}$, therefore $\sigma_{max}(\Sigma^1)$ is bounded by $\frac{nc\varepsilon}{b}$.
Therefore, if we choose $\varepsilon<\frac{b}{n(b+c)}$, we can get $\sigma_{max}(\Sigma) < 1 = \sigma_0$, which indicates that $G^2$ has a unique NE, which is the same for $G^1$.

\end{proof}

\end{document}